\documentclass[article,3p]{elsarticle}

\usepackage{setspace}
\usepackage{amsmath,amsfonts,amssymb,amsbsy,amscd}
\usepackage{mathrsfs,array}
\usepackage{verbatim}
\usepackage{lipsum}
\usepackage{tikz}
\usetikzlibrary{3d}
\usetikzlibrary {arrows.meta} 
\usepackage{upgreek}\linespread{1.0}
\usetikzlibrary{intersections, arrows, automata, backgrounds, calendar, chains, matrix, mindmap, patterns, petri, shadows, shapes.geometric, shapes.misc, spy, trees, shapes}
\usepackage{amsthm}
\usepackage{subcaption}
\usepackage{epstopdf}
\usepackage{booktabs}
\usepackage{empheq}
\usepackage{tabularx}

\usepackage{xargs}
\usepackage{xcolor}
\usepackage{soul}
\setlength{\marginparwidth}{4cm}
\usepackage{hyperref}
\hypersetup{
    colorlinks=true,
    linkcolor=blue,
    urlcolor=red,
    linktoc=all
}

\DeclareMathAlphabet{\mathscrbf}{OMS}{mdugm}{b}{n}

\usepackage[makeroom]{cancel}
\usepackage{amsbsy}

\usepackage[nameinlink,capitalise,noabbrev]{cleveref}

\newtheorem{thm}{Theorem}

\numberwithin{thm}{section} 
\newtheorem{rmk}[thm]{Remark}
\newproof{pf}{Proof}
\newproof{pot}{Proof of Theorem \ref{thm2}}

\def\B#1{\mbox{\boldmath{$#1$}}}

\newcommand{\divg}{{\rm div}}

\newcommand{\nn}{\nonumber}

\newcommand{\bu}{\mathbf{u}}

\newcommand{\bw}{\mathbf{w}}

\newcommand{\bv}{\mathbf{v}}
\newcommand{\Bn}{\mathbf{n}}
\newcommand{\bx}{\mathbf{x}}

\newcommand{\bg}{\B{g}}
\newcommand{\bJ}{\mathbf{J}}

\newcommand{\WW}{\mathcal{W}}

\setcounter{MaxMatrixCols}{15}

\def\be{\begin{equation}}
\def\ee{\end{equation}}
\def\ba{\begin{array}}
\def\ea{\end{array}}
\def\bea{\begin{eqnarray}}
\def\eea{\end{eqnarray}}
\def\beas{\begin{eqnarray*}}
\def\eeas{\end{eqnarray*}}
\newcommand{\bseq}{\begin{subequations}}
\newcommand{\eseq}{\end{subequations}}


\biboptions{numbers,sort&compress}

\crefname{lem}{Lemma}{Lemmas}
\crefname{thm}{Theorem}{Theorems}
\crefname{prop}{Proposition}{Propositions}
\crefname{corol}{Corollary}{Corollaries}
\crefname{rmk}{Remark}{Remarks}

\makeatletter
\newcommand{\myref}[1]{\cref{#1}\mynameref{#1}{\csname r@#1\endcsname}}
\newcommand{\Myref}[1]{\Cref{#1}\mynameref{#1}{\csname r@#1\endcsname}}

\def\mynameref#1#2{%
  \begingroup
    \edef\@mytxt{#2}%
    \edef\@mytst{\expandafter\@thirdoffive\@mytxt}%
    \ifx\@mytst\empty\else
    \space(\nameref{#1})\fi
  \endgroup
}
\makeatother

\definecolor{darkgreen}{rgb}{0, 0.7, 0}
\definecolor{dartmouthgreen}{rgb}{0.05, 0.5, 0.06}

\DeclareRobustCommand{\reva}[1]{%
  \ifmmode\text{\textcolor{darkgreen}{$#1$}}\else\textcolor{darkgreen}{#1}\fi
}

\definecolor{darkblue}{rgb}{0.0, 0.5, 1.0}
\DeclareRobustCommand{\revb}[1]{%
  \ifmmode\text{\textcolor{red}{$#1$}}\else\textcolor{red}{#1}\fi
}

\definecolor{amber}{rgb}{1.0, 0.49, 0.0}
\DeclareRobustCommand{\revc}[1]{%
  \ifmmode\text{\textcolor{dartmouthgreen}{$#1$}}\else\textcolor{dartmouthgreen}{#1}\fi
}

\DeclareRobustCommand{\revtbd}[1]{%
  \ifmmode\text{\hlmagenta{$#1$}}\else\hlmagenta{#1}\fi
}

\newcolumntype{R}[2]{%
    >{\adjustbox{angle=#1,lap=\width-(#2)}\bgroup}%
    l%
    <{\egroup}%
}
\makeatletter
\newcommand{\thickhline}{%
    \noalign {\ifnum 0=`}\fi \hrule height 1pt
    \futurelet \reserved@a \@xhline
}
\newcolumntype{"}{@{\hskip\tabcolsep\vrule width 1pt\hskip\tabcolsep}}
\makeatother

\usepackage{pifont}
\newcommand{\cmark}{\ding{51}}%
\newcommand{\xmark}{\ding{55}}%
\newcommand{\smark}{\ding{72}}%

\makeatletter
\def\ps@pprintTitle{%
  \let\@oddhead\@empty
  \let\@evenhead\@empty
  \let\@oddfoot\@empty
  \let\@evenfoot\@oddfoot
}
\makeatother

\journal{Journal of Computational Physics}

\begin{document}
\begin{frontmatter}

\title{{\Large The divergence-free velocity formulation of the consistent Navier-Stokes Cahn-Hilliard model with non-matching densities, divergence-conforming discretization, and benchmarks}}

\author{M.F.P. ten Eikelder\corref{cor1}}
\cortext[cor1]{Corresponding author}
\ead{marco.ten-eikelder@tu-darmstadt.de}
\author{D. Schillinger}
\ead{dominik.schillinger@tu-darmstadt.de}

\address{Institute for Mechanics, Computational Mechanics Group, Technische Universit\"{a}t Darmstadt, Franziska-Braun-Str. 7, 64287 Darmstadt, Germany}
\date{\today}

\date{}

\begin{abstract}
The prototypical diffuse-interface model that describes multi-component flows is the Navier-Stokes Cahn-Hilliard model (NSCH). Over the last decades many NSCH models have appeared that claim to describe the same physical phenomena, yet are distinct from one another. In a recent article [M.F.P. ten Eikelder, K.G. van der Zee, I. Akkerman, and D. Schillinger, Math. Mod. Meth. Appl. S. 33, pp 175-221, 2023.] we have established a unified framework of virtually all NSCH models. 
The framework reveals that there is only a single consistent NSCH model that naturally emanates from the underlying mixture theory.
In the current article we present, verify and validate this novel consistent NSCH model by means of numerical simulation. To this purpose we discretize a divergence-free velocity formulation of the NSCH model using divergence-conforming isogeometric spaces. We compare computations of our consistent model to results of existing models from literature.
The predictive capability of the numerical methodology is demonstrated via three-dimensional computations of a rising bubble and the contraction of a liquid filament that compare well with experimental data.
\end{abstract}

\begin{keyword}
Navier-Stokes Cahn-Hilliard model \sep Non-matching densities \sep Divergence-conforming simulation \sep Isogeometric analysis \sep Rising bubbles \sep Ligament contraction
\end{keyword}

\end{frontmatter}


\section{Introduction}
The Navier-Stokes Cahn-Hilliard (NSCH) model is a diffuse-interface flow model that describes the evolution of viscous incompressible isothermal fluid mixtures. This model has proven to be powerful approach for the simulation of free-surface problems in which topological changes, surface tension, and density jumps play an important role. Typical examples of these problems include the deformation, coalescence or break-up of bubble or droplets.

The development of NSCH models started with matching density models proposed by Hohenberg and Halperin \cite{hohenberg1977theory} and Gurtin \cite{gurtinmodel}. Over the last decades, the design of Navier-Stokes Cahn-Hilliard models has been extended to the non-matching density case with important contributions by Lowengrub and Truskinovsky \cite{lowengrub1998quasi}, Boyer  \cite{boyer2002theoretical}, Ding et al. \cite{ding2007diffuse}, Abels et al. \cite{abels2012thermodynamically}, Shen et al. \cite{shen2013mass},  Aki et al. \cite{aki2014quasi} and Shokrpour Roudbari et al. \cite{shokrpour2018diffuse}. Each of these works aims to describe the same physics, yet the proposed models are different. In particular, three problems occur: (1) the systems of balance laws of the various models are distinct before constitutive choices have been applied, (2) the energy-dissipation laws of the models are conflicting, and (3) some of the models are inconsistent in the single fluid regime. As such, there is no consensus in the realm of NSCH models, and the connection between the existing NSCH models has been mostly unexplored.

In our recent article \cite{eikelder2023unified} we have revealed the sought-after connections by establishing a unified framework of NSCH models with non-matching densities\footnote{This framework encompasses mass transfer between constituents resulting in models of Navier-Stokes Cahn-Hilliard Allen-Cahn type. In the current work we assume the absence of constituent mass transfer, and thus deal with Navier-Stokes Cahn-Hilliard models.}. Given that each of the models aims to describe the same physics, it is only natural to anticipate a \textit{single NSCH model}, not a collection of models. Of course, variations appear in the selection of constitutive models, but the general (form of the) model is fixed. Indeed, the unified framework naturally leads to a single NSCH model. This model is invariant to the set of fundamental variables. After applying small but important corrections, and subsequently applying simple variable transformations, the connection to virtually all existing NSCH models is established. For example, the longstanding belief is that there are two distinct classes of models, (i) models with mass-averaged velocities and (ii) models with volume-averaged velocities. Our framework indicates that these seemingly different classes of models are in fact two sides of the same coin; simple variable transformations reveal that these classes of models are equivalent.

Since the NSCH models describe the flow equations of a mixture of fluids, the correct theoretical framework is \textit{continuum mixture theory}, as proposed by Truesdell and Toupin \cite{truesdell1960classical,truesdell1984historical}. It appears that existing NSCH models are only partly established via mixture theory. As a consequence, either the balance laws are not compatible with mixture theory before constitutive modeling, or the final model is inconsistent in the single fluid regime. As such, none of the NSCH models in literature is fully compatible with the NSCH model, see \cref{table: overview models constit mod}, and (small) rectifications are necessary to match the \textit{consistent} NSCH model that we have introduced in \cite{eikelder2023unified}. The term \textit{consistent} conveys that the NSCH model is established in a consistent manner through mixture theory. Hence, the balance laws naturally emerge from mixture theory, and the model matches the incompressible Navier-Stokes equations in the single fluid regime.

\begin{table}[h!]
{\small
\begin{tabular}{m{15em}m{11em}m{8em}m{10em}}
\textbf{Model}                      & \textbf{MT-consistent BL} & \textbf{Single fluid} & \textbf{Energy-dissipation} \\[6pt] \thickhline\\[-4pt]
Abels et al. \cite{abels2012thermodynamically}  & {\color{red}\xmark}  & {\color{darkgreen}\cmark}      & {\color{orange}\smark}              \\[6pt] 
Aki et al. \cite{aki2014quasi}                  & {\color{darkgreen}\cmark}    & {\color{red}\xmark}    & {\color{darkgreen}\cmark}              \\[6pt] 
Boyer \cite{boyer2002theoretical}                    & {\color{red}\xmark}      & {\color{darkgreen}\cmark}   & {\color{red}\xmark}              \\[6pt] 
Ding et al. \cite{ding2007diffuse}                & {\color{red}\xmark}    & {\color{darkgreen}\cmark}   & {\color{red}\xmark}                   \\[6pt] 
Lowengrub and Truskinovsky \cite{lowengrub1998quasi} & {\color{darkgreen}\cmark}     & {\color{red}\xmark} & {\color{darkgreen}\cmark}                  \\[6pt] 
Shen et al. \cite{shen2013mass}               & {\color{darkgreen}\cmark}     & {\color{red}\xmark} & {\color{darkgreen}\cmark}                \\[6pt] 
Shokrpour Roudbari et al. \cite{shokrpour2018diffuse}  & {\color{darkgreen}\cmark}   & {\color{red}\xmark} & {\color{darkgreen}\cmark}              \\[6pt] 
\textit{Consistent NSCH model} \cite{eikelder2023unified} & {\color{darkgreen}\cmark}   & {\color{darkgreen}\cmark} & {\color{darkgreen}\cmark}              \\[6pt] \hline
\end{tabular}}
\caption{Comparison of the various NSCH models. The column `MT-consistent BL' indicates whether the balance laws (BL) of the model are compatible with mixture theory (MT). In the third column `Single fluid' we state whether the model is compatible with the incompressible Navier-Stokes equation in the single fluid regime, and in the last column whether the model is energy dissipative. The symbol {\color{orange}\smark} indicates that there exists an energy-dissipation law in which the associated kinetic energy is not an obvious approximation of the kinetic energy of the mixture. We refer to ten Eikelder et al. \cite{eikelder2023unified} for details.}
\label{table: overview models constit mod}
\end{table}

We refer from now on to \textit{the NSCH model}, of which a particular form reads:
\begin{subequations}\label{eq: model mass averaged: intro}
  \begin{align}
   \partial_t (\rho \bv) + \divg \left( \rho \bv\otimes \bv \right) + \nabla p + 
   \phi \nabla \mu
   - \divg \boldsymbol{\tau} -\rho\mathbf{g} &=~ 0, \label{eq: model  modified: intro: mom}\\
 \partial_t \rho + \divg(\rho \bv) &=~ 0, \label{eq: model modified: intro: cont} \\
  \partial_t \phi + \divg(\phi \bv)  - \divg \left(\mathbf{M}^v\nabla \left(\mu+\alpha p\right)\right) &=~0,\label{eq: model modified: intro: phi}\\
  \mu  - \dfrac{\partial \Psi}{\partial \phi} + {\rm div}\left( \dfrac{\partial \Psi}{\partial \nabla \phi}\right) &=~0,\label{eq: model modified: intro: mu}
  \end{align}
\end{subequations}
in domain $\Omega \subset \mathbb{R}^d$. Here $\bv$ is the mass averaged velocity, $\rho$ the mixture density, $\nu$ the dynamic viscosity, $p$ the mechanical pressure and $\phi$ the phase variable. 
Furthermore $\mu$ is a chemical potential quantity, $W=W(\phi)$ a double-well potential, $\Psi$ the volumetric free energy, $\mathbf{M}^v$ a degenerate mobility tensor, $\bg$ a force vector, $\alpha$ a constant linked to the density jump, and $\boldsymbol{\tau}$ is the Cauchy stress. The constant surface tension coefficient is $\sigma$ and $\epsilon$ is a parameter associated with the interface thickness. We remark that the reference of this model as a Cahn-Hilliard type model originates from the fact that \eqref{eq: model modified: intro: phi} is a (convective) Cahn-Hilliard equation when selecting the Ginzburg-Landau free energy for $\Psi$.

We emphasize that the NSCH model is a reduced model from the perspective of continuum mixture theory. Namely, the NSCH model consists of a single momentum equation, whereas a full mixture model would contain one momentum equation per constituent. We have recently established a diffuse-interface modeling framework \cite{eikelder2023thermodynamically} that is fully compatible with mixture theory. This model does not contain any Cahn-Hilliard type equation (and thus no associated mobility parameter), however the Allen-Cahn mass transfer model remains present. This accentuates that the Cahn-Hilliard component in the NSCH model emerges from a simplication assumption of NSCH model which does not match with mixture theory.

Over the last decades a large number of numerical methods has been presented for NSCH models with non-matching densities, see e.g. \cite{boyer2002theoretical, ding2007diffuse,shen2013mass,aland2014time,shokrpour2018diffuse,yue2020thermodynamically,bhopalam2022elasto}.
It is well-known in the community that it is hard (or perhaps impossible) to monolitlically discretize a mass-averaged velocity form of the NSCH model. In fact, the authors are not aware of any such numerical methodology. As a consequence, the volume-averaged velocity form of the NSCH model is more popular in the design of numerical methods. This is due to the fact that (in absense of mass transfer) this velocity is divergence-free and one can adopt standard stable velocity/pressure finite element pairs.

In this current article we present the consistent NSCH model along with its first numerical discretization, a verification study, and a validation with experimental data. To this purpose, we transform \eqref{eq: model mass averaged: intro} into an equivalent formulation in terms of a divergence-free velocity. This circumvents the numerical difficulty of the mass-averaged velocity form of the model \eqref{eq: model mass averaged: intro}. We then propose a monolithic discretization methodology that makes use of divergence conforming isogeometric analysis spaces. We note that the usage of these spaces for the discretization of a NSCH model is uncommon but not new, see Espath et al. \cite{espath2016energy}. Finally, we perform a two-dimensional verification, and a three-dimensional validation study.

The remainder of the paper is outlined as follows. In \cref{sec: model} we present the consistent Navier-Stokes Cahn-Hilliard model and analyze its properties. Additionally, we present an alternative but equivalent formulation that forms the basis for the discretization scheme. In \cref{sec: numerical methodology} we introduce the fully-discrete numerical scheme and its properties. Next, in \cref{sec: verification} we compare the proposed method to computations of existing models from literature. Then, in \cref{sec: validation} we simulate a number of three-dimensional benchmark problems. We close the paper with a conclusion and outlook in \cref{sec: summary outlook}.

\section{The Navier-Stokes Cahn-Hilliard model}\label{sec: model}
In this section we present the consistent Navier-Stokes Cahn-Hilliard model. First, in \cref{subsec: gov eq} we present the governing equations, and select compatible constitutive models. Then, in \cref{subsec: physical properties} we discuss the physical properties of the model. In \cref{subsec: non-dim form} we perform the non-dimensionalization.

\subsection{Governing equations and divergence-free formulation}\label{subsec: gov eq}
The derivation of the consistent Navier-Stokes Cahn-Hilliard model relies on mixture theory and the Coleman-Noll procedure. A detailed derivation and discussion on the various modeling choices can be found in ten Eikelder et al. \cite{eikelder2023unified}. 
In this article we work with the NSCH initial/boundary value problem \eqref{eq: model mass averaged: intro} which we repeat in detail: find the mass-averaged velocity $\bv:\Omega \rightarrow \mathbb{R}^d$, the pressure $p:\Omega \rightarrow \mathbb{R}$, the phase field $\phi:\Omega \rightarrow [-1,1]$ and the chemical potential $\mu:\Omega \rightarrow \mathbb{R}$ such that:
\begin{subequations}\label{eq: model mass averaged: sec 2}
  \begin{align}
   \partial_t (\rho \bv) + \divg \left( \rho \bv\otimes \bv \right) + \nabla p + 
   \phi \nabla \mu 
   - \divg \boldsymbol{\tau} -\rho\mathbf{g} &=~ 0, \label{eq: model mass averaged: sec 2: mom}\\
 \partial_t \rho + \divg(\rho \bv) &=~ 0, \label{eq: model mass averaged: sec 2: cont} \\
  \partial_t \phi + \divg(\phi \bv)  - \divg \left(\mathbf{M}^v\nabla \left(\mu+\alpha p\right)\right) &=~0,\label{eq: model mass averaged: sec 2: phi}\\
  \mu  - \dfrac{\partial \Psi}{\partial \phi} + {\rm div}\left( \dfrac{\partial \Psi}{\partial \nabla \phi}\right) &=~0.\label{eq: model mass averaged: sec 2: mu}
  \end{align}
\end{subequations}
with $\bv(\bx,0) = \bv_0(\bx) $ and $\phi(\bx,0) = \phi_0(\bx)$ in $\Omega$. The phase variable $\phi$ represents the difference of the volume fractions of the two constituents; $\phi = 1$ in the first constituent whereas $\phi = -1$ in the second constituent. The density $\rho$ and the dynamic viscosity $\nu$ are the superposition of the constant constituent quantities ($\rho_1, \rho_2$ and $\nu_1, \nu_2$) weighted by their volume fractions:
\begin{subequations}\label{eq: new model rho nu}
\begin{align}
 \rho(\phi) &=   \rho_1\dfrac{1+\phi}{2}+ \rho_2\dfrac{1-\phi}{2},\nn\\
 \nu(\phi)  &= \nu_1 \dfrac{1+\phi}{2} + \nu_2 \dfrac{1-\phi}{2},
\end{align}
\end{subequations}
with constant constituent densities $\rho_1$ and $\rho_2$ and constant constituent viscosities $\nu_1$ and $\nu_2$. Furthermore, $\bg = - g \boldsymbol{\jmath}$, where $\boldsymbol{\jmath}$ is the vertical unit vector and $g$ the gravitational acceleration, and $\alpha = (\rho_2-\rho_1)/(\rho_1+\rho_2)$ is a constant linked to the relative density jump. The Cauchy stress is of the form $\boldsymbol{\tau}=\nu (2\mathbf{D}+\lambda({\rm div}\bv) \mathbf{I})$ where $\mathbf{D}=(\nabla \mathbf{v} + (\nabla \mathbf{v})^T)/2$ is the symmetric velocity gradient, and the factor $\lambda \nu$ represents the second viscosity coefficient. In this work we assume that the Stokes's hypothesis is fulfilled, i.e. $\lambda = -2/d$, where $d$ is the number of spatial dimensions. Next, $\Psi$ denotes the Helmholtz free energy functional belonging to the constitutive class:
\begin{align}\label{eq: class Psi}
  \Psi = \Psi(\phi,\nabla \phi).
\end{align}
The quantity $\mu$ represents a chemical potential-like variable and is defined as the variational derivative of the integral of the Helmholtz free energy $\Psi$, i.e.:
\begin{align}
    \mu := \dfrac{\partial \Psi}{\partial \phi} - \divg \dfrac{\partial \Psi}{\partial \nabla \phi}.
\end{align}
The so-called mobility tensor $\mathbf{M}^v=\mathbf{M}^v(\phi, \nabla \phi, \mu, \nabla \mu, p)$ is a scaling factor of the term $\nabla (\mu+ \alpha p)$. This product is a model for velocity difference of the two components, see \cite{eikelder2023unified}. The mobility tensor is of degenerate type. This means that in the single-fluid regime, i.e. $\phi = \pm 1$, the mobility vanishes: $\mathbf{M}^v=0$. Furthermore, $\mathbf{M}^v$ is compatible with the condition:
\begin{align}\label{eq: mobility condition}
 - \nabla (\mu+ \alpha p) \cdot  \mathbf{M}^v\nabla (\mu + \alpha p) \leq~0.
\end{align}
Equation \eqref{eq: model mass averaged: sec 2: mom} represents the balance of mixture momentum, and \eqref{eq: model mass averaged: sec 2: cont} the balance of mixture mass. Next, \eqref{eq: model mass averaged: sec 2: phi} is the phase field equation, that due to the degenerate type of the mobility tensor, is compatible in the single fluid regime. Lastly, equation defines the variational derivative of the free energy and may be substituted into \eqref{eq: model mass averaged: sec 2: mom} and \eqref{eq: model mass averaged: sec 2: phi}.

We now convert the form \eqref{eq: model mass averaged: sec 2} of the NSCH model by means of the variable transformation:
\begin{align}
   \rho \bv =&~ \rho \bu + \bJ,
\end{align}
where the diffusive flux $\bJ = \bJ(p,\phi,\mu)$, and the degenerate mobility tensor $\mathbf{M}$ are given by:
\begin{subequations}
    \begin{align}
        \bJ =&~ - \dfrac{\rho_1-\rho_2}{2}\mathbf{M}\nabla (\mu+\alpha p),\\
        \mathbf{M}=&~\left(2\rho/\left(\rho_1+\rho_2\right)\right) \mathbf{M}^v.
    \end{align}
\end{subequations}
The (equivalent) NSCH initial/boundary value problem now takes the form: find the volume-averaged velocity $\bu:\Omega \rightarrow \mathbb{R}^d$, the pressure $p:\Omega \rightarrow \mathbb{R}$, the phase field $\phi:\Omega \rightarrow [-1,1]$ and the chemical potential $\mu:\Omega \rightarrow \mathbb{R}$ such that:
\begin{subequations}\label{eq: model orig}
  \begin{align}
   \partial_t (\rho \bu + \bJ) + \divg \left( \rho^{-1} \left(\rho \bu + \bJ\right) \otimes \left(\rho \bu + \bJ\right) \right) + \nabla p + 
   \phi \nabla \mu 
   - \divg  \boldsymbol{\tau} -\rho\mathbf{g} &=~ 0, \label{eq: model orig: mom}\\
 \divg \bu  &=~ 0, \label{eq: model orig: cont} \\
  \partial_t \phi + \bu \cdot \nabla \phi - \divg \left(\mathbf{M}\nabla \left(\mu+\alpha p\right)\right) &=~0,\label{eq: model orig: phi}\\
  \mu  - \dfrac{\partial \Psi}{\partial \phi} + {\rm div}\left( \dfrac{\partial \Psi}{\partial \nabla \phi}\right) &=~0,\label{eq: model orig: mu}
  \end{align}
\end{subequations}
with $\bu(\bx,0) = \bu_0(\bx) $, $\phi(\bx,0) = \phi_0(\bx)$ and $p(\bx,0) = p_0(\bx) $ in $\Omega$. The Cauchy stress $\boldsymbol{\tau} = \boldsymbol{\tau}(\mathbf{u},p,\phi,\mu)$ and symmetric velocity gradient $\mathbf{D}$ take the form:
\begin{subequations}
\begin{align}
   \boldsymbol{\tau} =&~ \nu \left( 2\mathbf{D}+\lambda{\rm div}\left(\rho^{-1}\bJ\right) \mathbf{I}\right),\\
   \mathbf{D} =&~ \nabla \left(\bu + \rho^{-1} \bJ\right)/2 + \nabla \left(\bu + \rho^{-1} \bJ\right)^T/2,
\end{align}
\end{subequations}
The balance of mixture momentum \eqref{eq: model orig: mom} is non-standard due to the form of the inertia terms. The mixture momentum equation of existing volume-averaged velocity models is incomplete when compared with \eqref{eq: model orig: mom}. Namely, these existing models either do not accommodate any diffusive flux ($\bJ$) \cite{ding2007diffuse,boyer2002theoretical}, or contain just a single diffusive flux \cite{abels2012thermodynamically}. The occurrence of the diffusive flux $\bJ$ in the mixture momentum equation is not new. Next, \eqref{eq:  model orig: cont} represents the divergence free property of the velocity field. This key structure of the formulation coincides with the single-fluid incompressible Navier-Stokes equations. Finally, \eqref{eq:  model orig: phi} is the phase field equation, and \eqref{eq:  model orig: mu} defines the variational derivative of the free energy.


In this paper we work with the Helmholtz free energy in the Ginzburg-Landau form:
\begin{subequations}\label{eq: Helmholtz free energy}
\begin{align}
    \Psi =&~ \frac{\sigma}{\varepsilon}W(\phi) + \dfrac{\sigma\varepsilon}{2}|\nabla \phi|^2\\
    W(\phi)=&~\frac{1}{4}(1-\phi^2)^2,
\end{align}
\end{subequations}
where $W=W(\phi)$ represents a double-well potential, $\varepsilon$ represents an interface thickness variable and $\sigma$ is a surface energy density coefficient. Additionally we choose a degenerate isotropic mobility tensor of the form:
\begin{subequations}\label{eq: mob tensor}
  \begin{align}
    \mathbf{M} =&~ m \mathbf{I},\\
    m(\phi) =&~ \gamma  (1- \phi^2)^2,
\end{align}
\end{subequations}
with $\gamma=\gamma(\varepsilon)$. This closes the Navier-Stokes Cahn-Hilliard system which reads:
\begin{subequations}\label{eq: model GL}
  \begin{align}
   \partial_t (\rho \bu + \bJ) + \divg \left(\rho^{-1} \left(\rho \bu + \bJ\right) \otimes \left(\rho \bu + \bJ\right) \right) + \nabla p + 
   \phi \nabla \mu 
   - \divg \boldsymbol{\tau}-\rho\mathbf{g}&=~ 0, \label{eq: model GL: mom}\\
 \divg \bu  &=~ 0, \label{eq: model GL: cont} \\
  \partial_t \phi + \bu \cdot \nabla \phi - \divg \left( m \nabla \left(\mu+\alpha p\right)\right) &=~0,\label{eq: model GL: PF}\\
  \mu - \dfrac{\sigma}{\varepsilon}W'(\phi)+  \sigma \varepsilon \Delta \phi &=~0.
  \end{align}\label{eq: model GL: mu}
\end{subequations}

\begin{rmk}[Korteweg tensor]
In many NSCH models the contribution of the surface forces appears in the momentum equation via a Korteweg type tensor. This is then often subsequently simplified for the Ginzburg-Landau free energy via the identity:
\begin{align}\label{eq: identity free energy}
     \phi\nabla \mu = \divg \left( \sigma \varepsilon \nabla \phi \otimes  \nabla \phi + \left(\mu\phi-\frac{\sigma}{\varepsilon}W(\phi) - \dfrac{\sigma\varepsilon}{2}|\nabla \phi|^2\right)\mathbf{I} \right).
\end{align}
We remark here that this identity has nothing to do with the specific Ginzburg-Landau free energy as it holds in the general case $\Psi = \Psi(\phi,\nabla \phi)$:
\begin{align}
\phi \nabla \mu = \divg \left( \nabla \phi \otimes \dfrac{\partial \Psi}{\partial \nabla \phi} + (\mu\phi-\Psi)\mathbf{I} \right).
\end{align}
\end{rmk}


\subsection{Physical properties}\label{subsec: physical properties}

A physically vital feature of the NSCH model \eqref{eq: model GL} is that it recovers the standard incompressible Navier-Stokes model in the single fluid regime.
\begin{thm}[Reduction to the incompressible Navier-Stokes equations] \label{thm: reduction NS}
  The NSCH model \eqref{eq: model GL} reduces to the standard incompressible Navier-Stokes model in the single fluid regime ($\phi = \pm 1$). 
\end{thm}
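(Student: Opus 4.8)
The plan is to substitute the single-fluid value $\phi \equiv +1$ (the case $\phi \equiv -1$ being entirely analogous) into the system \eqref{eq: model GL} and track how each term collapses. First I would record the immediate consequences of $\phi$ being a spatio-temporal constant: $\nabla \phi = \mathbf{0}$, $\Delta \phi = 0$, $\partial_t \phi = 0$, and, from \eqref{eq: new model rho nu}, $\rho \equiv \rho_1$ and $\nu \equiv \nu_1$ are constants. Evaluating the Ginzburg--Landau data \eqref{eq: Helmholtz free energy} gives $W(1) = 0$ and $W'(1) = 0$, while the degenerate mobility \eqref{eq: mob tensor} yields $m(1) = 0$.

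Next I would use the chemical-potential equation, the last line of \eqref{eq: model GL}, which reads $\mu = \tfrac{\sigma}{\varepsilon} W'(\phi) - \sigma \varepsilon \Delta \phi$; with $W'(1)=0$ and $\Delta\phi = 0$ this forces $\mu \equiv 0$, hence $\nabla \mu = \mathbf{0}$ and the capillary term $\phi \nabla \mu$ vanishes. Since $m(1) = 0$, the diffusive flux obeys $\bJ = -\tfrac{\rho_1-\rho_2}{2}\, m\, \nabla(\mu + \alpha p) = \mathbf{0}$, so that $\rho \bu + \bJ = \rho_1 \bu$, and the phase-field equation \eqref{eq: model GL: PF} degenerates to the trivial identity $0 = 0$. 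The Cauchy stress then reduces to $\boldsymbol{\tau} = 2\nu_1 \mathbf{D}$ with $\mathbf{D} = (\nabla \bu + (\nabla \bu)^T)/2$, because the $\lambda\, \divg(\rho^{-1}\bJ)\,\mathbf{I}$ contribution drops out.

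Finally I would assemble the momentum balance \eqref{eq: model GL: mom}: after the substitutions above it becomes $\rho_1 \partial_t \bu + \divg(\rho_1 \bu \otimes \bu) + \nabla p - \divg(2\nu_1 \mathbf{D}) - \rho_1 \mathbf{g} = \mathbf{0}$. Invoking the constraint \eqref{eq: model GL: cont}, $\divg \bu = 0$, I would simplify $\divg(\bu \otimes \bu) = \bu \cdot \nabla \bu$ and $\divg(2\mathbf{D}) = \Delta \bu$, arriving at $\rho_1(\partial_t \bu + \bu \cdot \nabla \bu) + \nabla p - \nu_1 \Delta \bu - \rho_1 \mathbf{g} = \mathbf{0}$ together with $\divg \bu = 0$ --- precisely the incompressible Navier--Stokes equations with constant density $\rho_1$ and viscosity $\nu_1$. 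The symmetric argument with $\phi \equiv -1$ produces the same system with $(\rho_1,\nu_1)$ replaced by $(\rho_2,\nu_2)$.

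There is no serious obstacle here; the proof is a direct computation. The only points that require a moment's care --- and that genuinely encode the model's design --- are that the degeneracy $m(\pm 1)=0$ simultaneously annihilates the flux $\bJ$ in both the inertia and the stress terms and trivializes the Cahn--Hilliard equation, and that the chemical potential is slaved to $\phi$ so that $\mu$, and hence the surface force $\phi\nabla\mu$, vanishes identically. One may additionally remark (not strictly needed for the statement as phrased) that $\phi \equiv \pm 1$ is a genuine solution branch, i.e. it is preserved by \eqref{eq: model GL: PF}, which is immediate since every term of that equation vanishes when $\phi$ is constant.
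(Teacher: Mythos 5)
Your proposal is correct and follows essentially the same route as the paper's proof: the degenerate mobility kills $\bJ$ (hence restores the standard $\mathbf{D}$ and trivializes the phase-field equation), the chemical potential vanishes since $W'(\pm 1)=0$ and $\nabla\phi=\mathbf{0}$, and the momentum balance collapses to incompressible Navier--Stokes. The only cosmetic difference is that the paper divides through by the constant density to write the result with $\tilde{p}=p/\rho$ and $\tilde{\nu}=\nu/\rho$, whereas you keep $\rho_1$ as an explicit factor; your additional remarks (explicit verification of $\mu\equiv 0$ and of the preservation of the pure-phase branch) merely spell out steps the paper states without computation.
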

\begin{proof}
  Noting that the mobility $m$ is degenerate, the diffusive flux $\mathbf{J}$ vanishes in the single fluid regime. As a consequence, the symmetric velocity gradient reduces to $\mathbf{D} = \left(\nabla \mathbf{u} + (\nabla \mathbf{u})^T\right)/2$. Furthermore, the chemical potential $\mu$ vanishes in the single fluid regime. Taking these observations into account, we find that taking $\phi = \pm 1$ in \eqref{eq: model GL} yields:
  \begin{subequations}\label{eq: model NS}
  \begin{align}
   \partial_t \bu + \divg \left( \bu\otimes \bu \right) + \nabla \tilde{p} - \divg \left( 2\tilde{\nu} \mathbf{D} \right)-\mathbf{g} &=~ 0, \label{eq: model NS: mom}\\
 \divg \bu &=~ 0, \label{eq: model NS: cont}
  \end{align}
\end{subequations}
where $\tilde{p} = p/\rho$ and $\tilde{\nu}=\nu/\rho$ with constant density $\rho$, and $\mathbf{D} = \left(\nabla \mathbf{u} + (\nabla \mathbf{u})^T\right)/2$. 
\end{proof}
\begin{rmk}[Inconsistency single fluid regime]
  Not all exisiting NSCH models reduce to the incompressible Navier-Stokes equations in the single fluid regime. In particular, some models that employ a constant mobility parameter \cite{aki2014quasi,lowengrub1998quasi,shen2013mass,shokrpour2018diffuse} do not share this feature. As a result, inconsistencies can occur in this regime. For details we refer to ten Eikelder et al. \cite{eikelder2023unified}.
\end{rmk}

\begin{thm}[Conservation mixture mass and phase]\label{thm: Conservation}
The formulation conserves the mixture mass and the phase variable:
\begin{subequations}
\begin{align}
        \dfrac{{\rm d}}{{\rm d}t}\displaystyle\int_\Omega \rho ~{\rm d}\Omega =&~ 0, \label{eq: mass conservation global}\\
        \dfrac{{\rm d}}{{\rm d}t}\displaystyle\int_\Omega \phi ~{\rm d}\Omega =&~ 0.
\end{align}
\end{subequations}
\end{thm}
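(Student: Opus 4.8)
The plan is to integrate the phase equation \eqref{eq: model GL: PF} over $\Omega$ and then deduce mixture-mass conservation from the affine dependence of $\rho$ on $\phi$. First I would use the divergence-free constraint \eqref{eq: model GL: cont} to recast the convective term of \eqref{eq: model GL: PF} in conservative form, $\bu\cdot\nabla\phi = \divg(\phi\bu)$, so that the phase equation reads $\partial_t\phi + \divg(\phi\bu - m\nabla(\mu+\alpha p)) = 0$. Integrating over $\Omega$, commuting the time derivative with the (time-independent) domain integral, and applying the divergence theorem gives
\begin{align}
\dfrac{{\rm d}}{{\rm d}t}\int_\Omega \phi\,{\rm d}\Omega = -\int_{\partial\Omega}\left(\phi\,\bu - m\nabla(\mu+\alpha p)\right)\cdot\bn\,{\rm d}S.
\end{align}
The right-hand side vanishes under the boundary conditions that close the model — impermeability $\bu\cdot\bn = 0$ together with the no-flux condition $m\nabla(\mu+\alpha p)\cdot\bn = 0$ on $\partial\Omega$ (or, alternatively, periodic boundary conditions) — which establishes the second identity.

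For the first identity I would exploit that, by \eqref{eq: new model rho nu}, the mixture density is affine in the phase field, $\rho(\phi) = \tfrac{1}{2}(\rho_1+\rho_2) + \tfrac{1}{2}(\rho_1-\rho_2)\phi$. Hence $\int_\Omega\rho\,{\rm d}\Omega = \tfrac{1}{2}(\rho_1+\rho_2)\,|\Omega| + \tfrac{1}{2}(\rho_1-\rho_2)\int_\Omega\phi\,{\rm d}\Omega$, and since $|\Omega|$ is fixed and $\int_\Omega\phi\,{\rm d}\Omega$ is constant by the previous step, $\int_\Omega\rho\,{\rm d}\Omega$ is constant as well. Equivalently, one can integrate the mass-averaged continuity equation \eqref{eq: model mass averaged: sec 2: cont} directly and use $\bv\cdot\bn = 0$ on $\partial\Omega$, which follows from $\rho\bv\cdot\bn = \rho\bu\cdot\bn + \bJ\cdot\bn$ and the impermeability and no-flux conditions.

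The only genuine subtlety is that the statement is silent about boundary data, so the argument rests on making the natural closed-system boundary conditions explicit; once the boundary fluxes of $\bu$ and $\bJ$ (equivalently of $m\nabla(\mu+\alpha p)$) are assumed to vanish, both identities follow from a single application of the divergence theorem. I expect no real obstacle beyond fixing those conditions — the remainder is routine.
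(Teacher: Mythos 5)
Your proposal is correct and follows essentially the same route as the paper, whose proof is the one-line remark that the result ``follows from integration of the mixture mass and phase field evolution equations''; you simply spell out the details (divergence theorem, vanishing boundary fluxes of $\bu$ and $m\nabla(\mu+\alpha p)$, and the affine dependence of $\rho$ on $\phi$) that the authors leave implicit. Your observation that the boundary conditions must be made explicit for the boundary integrals to vanish is a fair and accurate caveat, but it does not change the substance of the argument.
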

\begin{proof}
This follows from integration of the mixture mass and phase field evolution equations.
\end{proof}
Let us denote the local kinetic, gravitational and global energy as:
\begin{subequations}\label{eq: energy}
  \begin{align}
      \mathscr{K} =&~ \frac{1}{2}\rho \|\bv\|^2 = \frac{1}{2}\rho \| \bu + \rho^{-1} \bJ\|^2, \label{eq: kin energy}\\
      \mathscr{G} =&~ \rho g y,\\
      \mathscr{E} =&~ \displaystyle\int_\Omega \mathscr{K} + \mathscr{G} + \Psi ~{\rm d}\Omega,
  \end{align}
\end{subequations}
where we recall the identity $\bv = \bu + \rho^{-1} \bJ$.
The mass equations and the mixture momentum equation of the NSCH model \eqref{eq: model GL} imply the following global energy evolution.
\begin{thm}[Energy dissipation]\label{thm: energy balance}
Let $\bu, p$ and $\phi$ be smooth solutions of the strong form \eqref{eq: model GL}. The associated total energy $\mathscr{E}$ satisfies the dissipation inequality:
 \begin{align}\label{eq: energy dissipation}
   \frac{{\rm d}}{{\rm d}t} \mathscr{E}=&~- \displaystyle\int_\Omega \nu(\phi)(2\mathbf{D}+ \lambda (\divg \bv) \mathbf{I}):\nabla \bv ~{\rm d}x \nn\\
   &~ - \displaystyle\int_\Omega \nabla (\mu+ \alpha p) \cdot  (m\nabla (\mu + \alpha p)) ~{\rm d}x + \mathscr{B} \leq 0 + \mathscr{B},
 \end{align}
  where $\mathscr{B}$ contains the boundary contributions:
 \begin{align}
     \mathscr{B}=\displaystyle\int_{\partial \Omega} \Bn^T \left(-p\bv + \nu(\phi) (2\mathbf{D}+\lambda({\rm div}\bv) \mathbf{I})\bv-\dfrac{\partial \Psi}{\partial \nabla \phi}\left(\partial_t \phi+ \mathbf{v}\cdot \nabla \phi \right)+   (\mu + \alpha p) m\nabla (\mu + \alpha p)\right)~ {\rm d a},
 \end{align}
 and where we recall $\bv = \bu + \rho^{-1} \bJ$.
 \end{thm}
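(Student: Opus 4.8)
The plan is to carry out the standard energy estimate for NSCH systems: test the momentum balance \eqref{eq: model GL: mom} with the mass-averaged velocity $\bv = \bu + \rho^{-1}\bJ$, test the phase equation \eqref{eq: model GL: PF} (equivalently, in its mass-averaged form $\partial_t\phi + \divg(\phi\bv) = \divg(\mathbf{M}^v\nabla(\mu+\alpha p))$) with $\mu+\alpha p$, differentiate $\int_\Omega \Psi(\phi,\nabla\phi)\,{\rm d}x$ in time, add the three resulting identities, integrate by parts, and show that every volume term collapses onto the two dissipation integrals of \eqref{eq: energy dissipation} while the surviving surface terms assemble into $\mathscr{B}$. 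Here $\Omega$ is fixed in time, and I use that smooth solutions of \eqref{eq: model GL} satisfy the mixture mass balance $\partial_t\rho + \divg(\rho\bv) = 0$: since $\divg\bu = 0$, $\rho = \rho(\phi)$ is affine with $\rho'(\phi) = \tfrac{\rho_1-\rho_2}{2}$, and $\divg\bJ = -\tfrac{\rho_1-\rho_2}{2}(\partial_t\phi + \bu\cdot\nabla\phi)$ by \eqref{eq: model GL: PF}, one gets $\partial_t\rho + \divg(\rho\bv) = \tfrac{\rho_1-\rho_2}{2}(\partial_t\phi + \bu\cdot\nabla\phi) + \divg\bJ = 0$.

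Testing \eqref{eq: model GL: mom} with $\bv$ and writing the inertia as $\partial_t(\rho\bv) + \divg(\rho\bv\otimes\bv)$, the mass balance turns the inertia into $\partial_t\mathscr{K} + \divg(\mathscr{K}\bv)$ with $\mathscr{K}$ as in \eqref{eq: kin energy}, which integrates to $\frac{{\rm d}}{{\rm d}t}\int_\Omega\mathscr{K}\,{\rm d}x$ plus a boundary flux; integrating the viscous term by parts gives $-\int_\Omega \nu(\phi)(2\mathbf{D}+\lambda(\divg\bv)\mathbf{I}):\nabla\bv\,{\rm d}x$ and the viscous part of $\mathscr{B}$; integrating the pressure term by parts gives the $-p\bv$ part of $\mathscr{B}$ plus a volume remainder $-\int_\Omega p\,\divg\bv\,{\rm d}x$, which does not vanish since only $\bu$, not $\bv$, is solenoidal; and $-\int_\Omega\rho\bv\cdot\mathbf{g}\,{\rm d}x$ becomes $\frac{{\rm d}}{{\rm d}t}\int_\Omega\rho g y\,{\rm d}x$ plus a boundary flux, again via the mass balance together with $\nabla y = \boldsymbol{\jmath}$. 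Testing the phase equation with $\mu+\alpha p$ and integrating the mobility term by parts produces $\int_\Omega \nabla(\mu+\alpha p)\cdot\mathbf{M}^v\nabla(\mu+\alpha p)\,{\rm d}x$ with its boundary term, and leaves $\int_\Omega(\mu+\alpha p)(\partial_t\phi + \divg(\phi\bv))\,{\rm d}x$. Finally, differentiating $\int_\Omega\Psi\,{\rm d}x$, integrating the $\partial_t\nabla\phi$ contribution by parts, and inserting $\mu = \tfrac{\partial \Psi}{\partial \phi} - \divg\tfrac{\partial \Psi}{\partial \nabla\phi}$ gives $\frac{{\rm d}}{{\rm d}t}\int_\Omega\Psi\,{\rm d}x = \int_\Omega\mu\,\partial_t\phi\,{\rm d}x$ plus the boundary term carrying $\tfrac{\partial \Psi}{\partial \nabla\phi}$.

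Adding the three identities, three cancellations close the argument. The surface-tension coupling $\int_\Omega\phi\,\bv\cdot\nabla\mu\,{\rm d}x$ from the momentum test cancels the convective term $\int_\Omega\mu\,\divg(\phi\bv)\,{\rm d}x$ of the phase test after one integration by parts (one may equivalently substitute $\phi\nabla\mu = \divg(\nabla\phi\otimes\tfrac{\partial \Psi}{\partial \nabla\phi} + (\mu\phi-\Psi)\mathbf{I})$ from the Korteweg remark). The full mobility-dissipation integral $\int_\Omega\nabla(\mu+\alpha p)\cdot\mathbf{M}^v\nabla(\mu+\alpha p)\,{\rm d}x$ comes directly from the mobility term in the phase test, after the $\mu$ and $\mu+\alpha p$ time terms are matched through $\frac{{\rm d}}{{\rm d}t}\int_\Omega\Psi\,{\rm d}x$. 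And the pressure remainder $-\int_\Omega p\,\divg\bv\,{\rm d}x$ cancels against the $\alpha p$-weighted terms $\alpha\int_\Omega p\,(\partial_t\phi + \divg(\phi\bv))\,{\rm d}x$ left over from the phase test, because \eqref{eq: model GL: PF} gives $\partial_t\phi + \divg(\phi\bv) = \divg(\mathbf{M}^v\nabla(\mu+\alpha p))$, while the mass balance, \eqref{eq: model GL: PF} and the affine relation $\rho(\phi) - \tfrac{\rho_1-\rho_2}{2}\phi = \tfrac{\rho_1+\rho_2}{2}$ give the identity $\divg\bv = \alpha\,\divg(\mathbf{M}^v\nabla(\mu+\alpha p))$. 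Collecting the surface integrals into $\mathscr{B}$ yields the equality in \eqref{eq: energy dissipation}.

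The main obstacle is exactly this last bookkeeping: one must track every volume term carrying a factor $\bJ$, $\divg\bv$ or $\alpha p$ and check that it cancels or converts into a surface term. This hinges on the algebraic relations $\bJ = -\tfrac{\rho_1-\rho_2}{2}\mathbf{M}\nabla(\mu+\alpha p)$, $\mathbf{M} = (2\rho/(\rho_1+\rho_2))\mathbf{M}^v$ and $\alpha = (\rho_2-\rho_1)/(\rho_1+\rho_2)$, which together give the compact relation $\rho^{-1}\bJ = \alpha\,\mathbf{M}^v\nabla(\mu+\alpha p)$, i.e. $\bv = \bu + \alpha\,\mathbf{M}^v\nabla(\mu+\alpha p)$ — the relation that makes the cancellations close and, in particular, that produces $\divg\bv = \alpha\,\divg(\mathbf{M}^v\nabla(\mu+\alpha p))$. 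Once the equality in \eqref{eq: energy dissipation} is established, the stated inequality is immediate: the mobility term is nonpositive by \eqref{eq: mobility condition} (with the degenerate, isotropic $\mathbf{M}^v \ge 0$), and the viscous term is nonpositive because $\nu(\phi) > 0$ and Stokes's hypothesis $\lambda = -2/d$ makes $\nu(\phi)\,(2\mathbf{D}+\lambda(\divg\bv)\mathbf{I}):\nabla\bv = 2\,\nu(\phi)\,\|\mathbf{D} - \tfrac{1}{d}(\tr\mathbf{D})\mathbf{I}\|^2 \ge 0$; hence $\frac{{\rm d}}{{\rm d}t}\mathscr{E} \le \mathscr{B}$.
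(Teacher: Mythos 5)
Your strategy is the right one and, since the paper's own proof of \cref{thm: energy balance} is a bare citation of \cite{eikelder2023unified}, yours is the only actual argument on the table: test the momentum balance with $\bv$, test the phase equation with $\mu+\alpha p$, differentiate $\int_\Omega\Psi\,{\rm d}x$, and add. You correctly identify the linchpin, namely $\rho^{-1}\bJ=\alpha\,\mathbf{M}^v\nabla(\mu+\alpha p)$ and hence $\divg\bv=\alpha\,\divg(\mathbf{M}^v\nabla(\mu+\alpha p))$, and the three cancellations you describe (surface-tension coupling against the convective phase term, the pressure remainder $\int_\Omega p\,\divg\bv\,{\rm d}x$ against the $\alpha p$-weighted leftovers of the phase test, and the $\mu\,\partial_t\phi$ matching through $\tfrac{{\rm d}}{{\rm d}t}\int_\Omega\Psi\,{\rm d}x$) do close the volume-term bookkeeping; I checked them. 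The derivation of $\partial_t\rho+\divg(\rho\bv)=0$ from $\divg\bu=0$, the affineness of $\rho(\phi)$ and \eqref{eq: model GL: PF} is also correct, as are the final sign arguments via Stokes's hypothesis and the nonnegativity of the degenerate mobility.

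Two mismatches with the literal statement remain, and you pass over both. First, your computation yields the diffusive dissipation weighted by $\mathbf{M}^v=m\,\mathbf{I}/(1-\alpha\phi)$, not by $m$ as written in \eqref{eq: energy dissipation}: starting from \eqref{eq: model GL: PF}, the mobility term of the phase test gives $-\int_\Omega m|\nabla(\mu+\alpha p)|^2\,{\rm d}x$, but the leftover $\bJ$- and $\alpha p$-weighted terms contribute an additional $-\int_\Omega\tfrac{\alpha\phi}{1-\alpha\phi}\,m|\nabla(\mu+\alpha p)|^2\,{\rm d}x$, which is not sign-definite on its own; only the sum $-\int_\Omega\tfrac{m}{1-\alpha\phi}|\nabla(\mu+\alpha p)|^2\,{\rm d}x$ is. Since $1-\alpha\phi=2\rho/(\rho_1+\rho_2)>0$ the inequality survives, and $\mathbf{M}^v$ is in fact the weight for which condition \eqref{eq: mobility condition} is stated, but you should say explicitly that what you prove differs from the displayed identity by this positive factor rather than asserting that it matches. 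Second, your route produces boundary fluxes $-(\mathscr{K}+\mathscr{G})\,\bv\cdot\bn$ and $-\mu\phi\,\bv\cdot\bn$ and a free-energy flux carrying $\partial_t\phi$ alone, whereas the stated $\mathscr{B}$ contains neither of the former and carries the material derivative $\partial_t\phi+\bv\cdot\nabla\phi$; reproducing $\mathscr{B}$ term by term requires treating $\phi\nabla\mu$ via the Korteweg identity (which you mention only parenthetically) and, for the convective fluxes, an implicit assumption such as $\bv\cdot\bn=0$ on $\partial\Omega$. As written, ``collecting the surface integrals into $\mathscr{B}$'' is an assertion, not a verification.
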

 \begin{proof}
  See ten Eikelder et al. \cite{eikelder2023unified}.
\end{proof}

\begin{rmk}[Viscous term]
  The viscous term in \eqref{eq: energy dissipation} is negative by means of the well-known identity:
   \begin{align}
    -\nu ( 2 \mathbf{D}+ \lambda \divg \bv \mathbf{I}):\nabla \bv = -2 \nu \left( \mathbf{D} - \frac{1}{d} \left( {\rm div} \bv \right) \mathbf{I} \right) : \left( \mathbf{D} - \frac{1}{d} \left( {\rm div} \bv \right) \mathbf{I} \right) - \nu \left(\lambda + \frac{2}{d} \right) \left( {\rm div} \bv \right)^2 \leq 0.
  \end{align}
\end{rmk}

\begin{rmk}[Second law]
  Even though this energy dissipation property is arguably favorable from the analytical and numerically perspective, it is not equivalent to the second law of thermodynamics. Unfortunately, in many articles the energy dissipation property is incorrectly referred to as the second law. The second law involves individual constituent quantities, whereas \cref{thm: energy balance} contains mixture quantities that do not match the superposition of the constituent quantities. For more details we refer to \cite{eikelder2023unified,eikelder2023thermodynamically}.
\end{rmk}
  \begin{rmk}[Second law]
  The NSCH model is not the sole model for non-matching density flows with an energy-dissipation property. It can be shown that the (diffuse-interface) level-set model (together with a particular discretization) shares this feature \cite{ten2021novel}.
\end{rmk}

Next, we focus on the equilibrium properties. The equilibrium ($E$) solution $(\bu_E,p_E,\phi_E,\mu_E)$ of the model \eqref{eq: model GL} is characterized by:
\begin{align}
    (\bu_E,p_E,\phi_E) =&~ \underset{(\bu,p,\phi)}{\rm argmin}~ \mathscr{E}(\bu,p,\phi), 
\end{align}
where $\mu_E = (\sigma/\varepsilon)W'(\phi_E)-  \sigma \varepsilon \Delta \phi_E$. Restricting to smooth solutions we have the equivalence:
\begin{align}\label{eq: min energy}
    (\bu_E,p_E,\phi_E) = \underset{(\bu,p,\phi,\mu)}{\rm argmin}~ \mathscr{E}(\bu,p,\phi) ~~ \Longleftrightarrow ~~ \dfrac{{\rm d}}{{\rm d}t}\mathscr{E}(\bu_E,p_E,\phi_E) = 0.
\end{align}
Invoking \cref{thm: energy balance} we arrive at the equilibrium conditions:
\begin{subequations}\label{eq: equilibrium cond}
\begin{align}
2 \nu_E \left( \mathbf{D}_E - \frac{1}{d} \left( {\rm div} \bv_E \right) \mathbf{I} \right) : \left( \mathbf{D}_E - \frac{1}{d} \left( {\rm div} \bv_E \right) \mathbf{I} \right) =&~ 0, \label{eq: equilibrium cond a}\\
 \nu_E \left(\lambda + \frac{2}{d} \right) \left( {\rm div} \bv_E \right)^2 =&~0, \label{eq: equilibrium cond b}\\
 m_E \nabla (\mu_E+ \alpha p_E) \cdot  \nabla (\mu_E + \alpha p_E) =&~0, \label{eq: equilibrium cond c}
\end{align}
\end{subequations}
with $\nu_E=\nu(\phi_E)$, $\mathbf{D}_E=\mathbf{D}(\bu_E,\phi_E,\mu_E)$ and $m_E = m(\phi_E)$. From \eqref{eq: equilibrium cond a}-\eqref{eq: equilibrium cond b} we deduce $\bv_E = {\rm const}$, and hence the kinetic energy vanishes. Recalling the condition on the mobility tensor \eqref{eq: mobility condition}, the equilibrium condition \eqref{eq: equilibrium cond c} implies $\mu_E+ \alpha p_E = {\rm const}$. As a consequence, the diffusive flux $\bJ$ vanishes in equilibrium. We now arrive at the equilibrium conditions:
\begin{subequations}
  \begin{align}
     \nabla p_E + \phi_E \nabla \mu_E - \rho_E \mathbf{g} &=~ 0, \label{eq: eq1}\\
     \mu_E+ \alpha p_E &=~ {\rm const}, \label{eq: eq2}
\end{align}
\end{subequations}
where \eqref{eq: eq1} follows from the momentum balance law. In the trivial case of a pure fluid ($\phi = \pm 1$) we have $\mu_E = 0$ and retrieve the hydrostatic equilibrium pressure $p_E = - \rho_E g y + {\rm const}$. We now deduce the equilibrium profile in the non-trivial mixture case ($-1<\phi_E<1$) in absence of gravitational forces ($\mathbf{g} = 0$). Inserting \eqref{eq: eq2} into \eqref{eq: eq1} we find
\begin{align}
    (1-\alpha \phi_E) \nabla p_E = 0.
\end{align}
Since $1-\alpha \phi_E \neq 0$ we find $p_E = {\rm const}$. This is consistent with the observation that fluids at rest have a constant pressure. Subsequently from \eqref{eq: eq2} we find $\mu_E = {\rm const}$. Noting that $\mu_E = 0$ in the pure phase we deduce $\mu_E = 0$. In the one-dimensional situation this condition is fulfilled by the well-known smooth profile:
\begin{align}\label{eq: tanh dim}
    \phi_E(s) = \tanh\left(\frac{s}{\varepsilon\sqrt{2}}\right),
\end{align}
where $s$ denotes the spatial coordinate centered at the interface. One can easily verify that free energy of the profile \eqref{eq: tanh dim} is zero, and thus the equilibrium solution is compatible with \eqref{eq: min energy}. Finally, we associate the surface energy density coefficient $\sigma$ with the surface tension coefficient, denoted $\tilde{\sigma}$. In the one-dimensional situation the integral of the free energy across the interface yields:
\begin{align}
    \int_{\mathbb{R}} \Psi(\phi_E)~ {\rm d}s = \sigma\frac{2\sqrt{2}}{3}.
\end{align}
We use the common practice to associate $\tilde{\sigma}$ with the one-dimensional free energy integral and set $
\tilde{\sigma} = \sigma\frac{2\sqrt{2}}{3}$.

\subsection{Non-dimensional form}\label{subsec: non-dim form}
We re-scale the system \eqref{eq: model orig} based on the following dimensionless variables:
  \begin{align}\label{eq: ref values}
    \bx^* =&~ \frac{\bx}{X_0}, \quad t^* = \frac{t}{T_0}, \quad \bu^* = \frac{\bu}{U_0}, \quad \rho^* = \frac{\rho}{\rho_1},  \quad \nu^* = \frac{\nu}{\nu_1}, \nn\\
    p^* =&~ \frac{p}{\rho_1 U_0^2}, \quad \mu^* = \frac{\mu}{\rho_1 U_0^2}, \quad m^* = \frac{m \rho_1 U_0}{X_0},
  \end{align}
where $X_0, T_0$ and $U_0$ are characteristic length, time and velocity scales, respectively, related via $U_0=X_0/T_0$. The dimensionless system reads:
\begin{subequations}\label{eq: model orig dim less}
  \begin{align}
   \partial_t (\rho \bu + \bJ) + \divg \left( \rho^{-1}\left(\rho \bu +  \bJ\right)\otimes \left(\rho \bu +  \bJ\right) \right) + \nabla p + \phi \nabla \mu
    - \frac{1}{\mathbb{R}{\rm e}}\divg  \boldsymbol{\tau} + \frac{1}{\mathbb{F}{\rm r}^2} \rho\boldsymbol{\jmath} &=~ 0, \label{eq: model orig: mom dim less}\\
   \divg \bu &=~ 0, \label{eq: model orig: cont dim less} \\
 \partial_t \phi +   \bu \cdot \nabla \phi - \divg \left(m \nabla (\mu+\alpha p)\right) &=~0,\label{eq: model orig: PF dim less}\\
  \mu - \frac{1}{\mathbb{W}{\rm e}\mathbb{C}{\rm n}} W'(\phi)+\frac{\mathbb{C}{\rm n}}{\mathbb{W}{\rm e}} \Delta \phi&=~0,\label{eq: model orig: mu dim less}
  \end{align}
\end{subequations}
where we have omitted the $*$ symbols. The dimensionless coefficients are the Reynolds number ($\mathbb{R}{\rm e}$), the Weber number ($\mathbb{W}{\rm e}$), the Froude number ($\mathbb{F}{\rm r}$) and the Cahn number ($\mathbb{C}{\rm n}$) given by:
\begin{subequations}\label{eq: dimensionless quantities}
\begin{align}
     \mathbb{R}{\rm e} =&~ \frac{\rho_1 U_0 X_0}{\nu_1},\\
     \mathbb{W}{\rm e} =&~ \frac{\rho_1 U_0^2 X_0}{\sigma},\\
     \mathbb{F}{\rm r} =&~ \frac{U_0}{\sqrt{g X_0}},\\
     \mathbb{C}{\rm n} =&~ \frac{\varepsilon}{ X_0}. 
\end{align}
\end{subequations}
The kinetic, gravitational, and free energy take the form:
\begin{align}
    \mathscr{K} =&~ \frac{1}{2}\rho \|\bv\|^2 = \frac{1}{2}\rho \| \bu + \rho^{-1} \bJ\|^2, \\
    \mathscr{G} =&~ \frac{1}{\mathbb{F}{\rm r}^2}\rho(\phi) y, \\
    \Psi =&~ \frac{\mathbb{C}{\rm n}}{2\mathbb{W}{\rm e}}  \nabla \phi\cdot\nabla \phi + \frac{1}{\mathbb{W}{\rm e}\mathbb{C}{\rm n}} W(\phi).
\end{align}
The one-dimensional interface profile (in absence of gravitational forces) reads in non-dimensional form:
\begin{align}
    \phi(s) = \tanh\left(\frac{s}{\mathbb{C}{\rm n}\sqrt{2}}\right),
\end{align}
where $s$ is a non-dimensional spatial coordinate centered at the origin.
\section{Numerical methodology}\label{sec: numerical methodology}

In this section we present the numerical method for the consistent NSCH model. First we discuss the weak formulation in \cref{sec: num method; subsec: weak form}. Next, we present the isogeometric spatial discretization in \cref{sec: num method; subsec: spat discr}, and subsequently the temporal discretization in \cref{sec: num method; subsec: time discr}.

\subsection{Weak formulation}\label{sec: num method; subsec: weak form}

We base the methodology on the NSCH model formulated using the volume-averaged velocity as fundamental variable, as presented in the dimensionless form in \eqref{eq: model orig dim less}. We denote the divergence-conforming trial solution space as $\mathcal{W}_T = \mathcal{V}_T \times \mathcal{Q}_T^3$, where $\mathcal{V}_T$ denotes the trail solution space for $\bu=\bu(t)$, and $\mathcal{Q}_T$ for $p=p(t), \phi=\phi(t)$ and $\mu=\mu(t)$. The corresponding divergence-conforming test function space denotes $\mathcal{W} = \mathcal{V} \times \mathcal{Q}^3$. The weak formulation takes the form:\\

\textit{Find $(\bu,p, \phi, \mu) \in \mathcal{W}_T$ such that for all $(\bw, q, \psi, \zeta) \in \mathcal{W}$:}
\begin{subequations}\label{eq: model weak}
\begin{align}
   (\bw,  \partial_t (\rho \bu + \bJ))_\Omega  -(\nabla \bw, \rho^{-1}(\rho \bu + \bJ) \otimes (\rho \bu + \bJ))_\Omega - (\divg \bw, p)_\Omega +\left(\bw, \phi\nabla \mu\right)_\Omega&\nn \\
+ \frac{1}{\mathbb{R}{\rm e}}\left(\nabla \bw, \boldsymbol{\tau} \right)_\Omega + \frac{1}{\mathbb{F}{\rm r}^2} \left(\bw,   \rho \boldsymbol{\jmath}\right)_\Omega &=~ 0, \label{eq: model weak: mom}\\
 ( q, {\rm div} \bu)_\Omega &=~ 0, \label{eq: model weak: cont}\\
  \left( \psi, \partial_t \phi\right)_\Omega +\left(\psi, \bu \cdot \nabla \phi \right)_\Omega + \left( \nabla \psi, m \nabla \left( \mu+\alpha p\right)\right)_\Omega  &=~ 0, \label{eq: model weak: PF}\\
  \left(\zeta, \mu \right)_\Omega  -\left( \nabla \zeta,\frac{\mathbb{C}{\rm n}}{\mathbb{W}{\rm e}}  \nabla \phi\right)_\Omega - \left(\zeta,  \frac{1}{\mathbb{C}{\rm n}\mathbb{W}{\rm e}} W'(\phi)\right)_\Omega  
  &=~0 , \label{eq: model weak: mu}   
\end{align}
\end{subequations}
where $\bJ = - (\rho_1-\rho_2)m\nabla (\mu+\alpha p)/2$. 

\subsection{Spatial discretization}\label{sec: num method; subsec: spat discr}

We apply the finite element/isogeometric analysis methodology. The parametric domain is defined as $\hat{\Omega} := (-1,1)^d \subset \mathbb{R}^d$, and $\mathcal{M}$ denotes the associated mesh. The parametric domain maps via the continuously differentiable geometrical map (with continuously differentiable inverse) $\mathbf{F}:\hat{\Omega} \rightarrow \Omega$ to the physical domain $\Omega \subset \mathbb{R}^d$. Similarly, the parametric mesh $\mathcal{M}$ maps to the corresponding physical mesh via:
\begin{align}
  \mathcal{K} = \mathbf{F}(\mathcal{M}):= \left\{\Omega_K: \Omega_K= \mathbf{F}(Q), Q \in \mathcal{M} \right\}.
\end{align}
Let $\mathbf{J}_{\mathbf{x}} = \partial \bx/\partial \boldsymbol{\xi}$ denote the Jacobian of the mapping $\mathbf{F}$. The parametric mesh size is its diagonal length, $h_Q = \text{diag}(Q)$ for $Q \in \mathcal{M}$, and the associated physical mesh size $h_K$ is:
\begin{align}
  h_{K}^2 := \frac{h_Q^2}{d} \| \mathbf{J}_{\mathbf{x}} \|_F^2.
\end{align}
Here we recall that $d$ denotes the number of  physical dimensions, and we use the subscript $F$ to denote the Frobenius norm. The (objective) Frobenius norm of the Jacobian equals:
\begin{align}
  \| \mathbf{J}_{\mathbf{x}} \|_F^2 = {\rm Tr}\left(\mathbf{G}^{-1}\right),
\end{align}
where ${\rm Tr}$ denotes the trace operator. Here the element metric tensor and its inverse are:
\begin{subequations}
\begin{align}
  \mathbf{G} =&~ 
  \mathbf{J}_{\mathbf{x}}^{-T} \mathbf{J}_{\mathbf{x}}^{-1},\\
  \mathbf{G}^{-1} =&~ 
  \mathbf{J}_{\mathbf{x}} \mathbf{J}_{\mathbf{x}}^{T}.
\end{align}
\end{subequations}

We now introduce the discrete isogeometric test function space $\WW^h \subset \WW$ and time-dependent solution space $\WW^h_T \subset \WW_T$ spanned by NURBS basis functions. The superscript $h$ indicates that the space is finite-dimensional.  Both the test function space and solution space are divergence-conforming, and we take $\mathcal{W}^h:= \mathcal{V}^h \times (\mathcal{Q}^h)^3$ and $\mathcal{W}_T^h:= \mathcal{V}_T^h \times (\mathcal{Q}_T^h)^3$. For details of the construction we refer to Evans and Hughes \cite{Evans13unsteadyNS}. Applying the continuous Galerkin method now results in the semi-discrete approximation of \eqref{eq: model GL}:\\

\textit{Find $(\bu^h, p^h, \phi^h, \mu^h) \in \WW^h_T$ such that for all $(\bw^h, q^h, \psi^h, \zeta^h) \in \WW^h$:}
\begin{subequations}\label{eq: model weak semi-discrete}
\begin{align}
   (\bw^h,  \partial_t (\rho^h \bu^h + \bJ^h))_\Omega  -(\nabla \bw^h, (\rho^h)^{-1} (\rho^h \bu^h + \bJ^h) \otimes (\rho^h \bu^h + \bJ^h))_\Omega   &\nn \\
- (\divg \bw^h, p)_\Omega+\left(\bw^h, \phi^h\nabla \mu^h\right)_\Omega+ (\nabla \bw^h, \boldsymbol{\tau}^h)_\Omega+ \frac{1}{\mathbb{F}{\rm r}^2} \left(\bw^h, \rho^h \boldsymbol{\jmath}\right)_\Omega &=~ 0, \label{eq: model weak semi-discrete: mom}\\
 ( q^h, {\rm div} \bu^h)_\Omega &=~ 0, \label{eq: model weak semi-discrete: cont}\\
  \left( \psi^h, \partial_t \phi^h\right)_\Omega + \left(\psi^h, \bu^h \cdot \nabla \phi^h \right)_\Omega + \frac{1}{\mathbb{W}{\rm e}}\left( \nabla \psi, m\nabla \left(
  \mu^h + \alpha p^h\right)\right)_\Omega  &=~ 0, \label{eq: model weak semi-discrete: PF}\\
  \left(\zeta^h, \mu^h \right)_\Omega  -\left( \nabla \zeta^h,\frac{\mathbb{C}{\rm n}}{\mathbb{W}{\rm e}}  \nabla \phi^h\right)_\Omega - \left(\zeta^h,  \frac{1}{\mathbb{C}{\rm n}\mathbb{W}{\rm e}} W'(\phi^h)\right)_\Omega  
  &=~0, \label{eq: model weak semi-discrete: mu}
\end{align}
\end{subequations}
where $\bu^h(0) = \bu^h_0$, $\phi^h(0)=\phi^h_0$ and $p^h(0)=p^h_0$ in $\Omega$, $\rho^h=\rho(\phi^h)$, $\boldsymbol{\tau}^h=\boldsymbol{\tau}(\mathbf{u}^h,p^h,\phi^h,\mu^h)$, $\bJ^h = \bJ(p^h,\phi^h,\mu^h)$ and $m^h=m(\phi^h)$. The semi-discrete formulation \eqref{eq: model weak semi-discrete} reduces to the standard (weak) form of the incompressible Navier-Stokes equations in the single fluid regime. Additionally, it inherits the conservation property from \cref{thm: Conservation} and has pointwise divergence-free velocities.
\begin{thm}[Properties semi-discrete formulation]
\label{thm: energy dissipation modified model cont}
 Let $(\bu^h, p^h, \phi^h, \mu^h)$ be a smooth solution of the semi-discrete formulation \eqref{eq: model weak semi-discrete}. 
 The formulation has the properties:
 \begin{enumerate}
  \item It conserves the global mass and the global phase field:
  \begin{subequations}
  \begin{align}
    \dfrac{{\rm d}}{{\rm d}t}\displaystyle\int_\Omega \rho^h ~{\rm d}\Omega =&~ 0,\\
    \dfrac{{\rm d}}{{\rm d}t}\displaystyle\int_\Omega \phi^h ~{\rm d}\Omega =&~ 0.
  \end{align}
  \end{subequations}
  \item It has pointwise divergence-free velocities:
  \begin{align}\label{prop 2 cont}
    \divg \bu^h \equiv 0.
  \end{align}
 \end{enumerate}
\end{thm}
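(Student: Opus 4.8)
The plan is to verify the two properties directly from the semi-discrete weak formulation \eqref{eq: model weak semi-discrete}, mimicking the continuous proofs of \cref{thm: Conservation} but being careful that the required test functions belong to the discrete spaces.

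\textbf{Property 2 (pointwise divergence-free velocities).} First I would observe that this is the easy half: it follows immediately from the divergence-conforming construction of the spaces. The continuity equation \eqref{eq: model weak semi-discrete: cont} states $(q^h, \divg \bu^h)_\Omega = 0$ for all $q^h \in \mathcal{Q}^h$. The key structural fact about divergence-conforming isogeometric spaces (as constructed in Evans and Hughes \cite{Evans13unsteadyNS}) is that $\divg$ maps $\mathcal{V}^h$ \emph{into} $\mathcal{Q}^h$; hence $\divg \bu^h$ is itself an admissible test function, and taking $q^h = \divg \bu^h$ gives $\|\divg \bu^h\|^2_{L^2(\Omega)} = 0$, so $\divg \bu^h \equiv 0$ pointwise. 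I would state this reliance on the commuting-diagram property explicitly, since it is what upgrades the weak (integral) incompressibility constraint to a pointwise one.

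\textbf{Property 1 (conservation of mass and phase).} For the phase field, I would test \eqref{eq: model weak semi-discrete: PF} with the constant function $\psi^h = 1$ (which lies in $\mathcal{Q}^h$ since NURBS spaces contain constants). Then $\left(1, \partial_t \phi^h\right)_\Omega = \frac{\rm d}{{\rm d}t}\int_\Omega \phi^h\,{\rm d}\Omega$, the diffusive term $\left(\nabla 1, m\nabla(\mu^h+\alpha p^h)\right)_\Omega$ vanishes because $\nabla 1 = 0$, and the convective term $\left(1, \bu^h\cdot\nabla\phi^h\right)_\Omega$ must be shown to vanish. Here I would use $\divg\bu^h \equiv 0$ from Property 2: write $\bu^h\cdot\nabla\phi^h = \divg(\phi^h\bu^h) - \phi^h\divg\bu^h = \divg(\phi^h\bu^h)$, so $\left(1,\bu^h\cdot\nabla\phi^h\right)_\Omega = \int_\Omega \divg(\phi^h\bu^h)\,{\rm d}\Omega = \int_{\partial\Omega}\phi^h\bu^h\cdot\bn\,{\rm d}a$, which vanishes under the (no-penetration / homogeneous) boundary conditions built into $\mathcal{V}^h$. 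Combining these yields $\frac{\rm d}{{\rm d}t}\int_\Omega\phi^h\,{\rm d}\Omega = 0$.

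For mixture mass conservation, note $\rho^h = \rho(\phi^h)$ is affine in $\phi^h$ by \eqref{eq: new model rho nu}, namely $\rho(\phi) = \tfrac{1}{2}(\rho_1+\rho_2) + \tfrac{1}{2}(\rho_1-\rho_2)\phi$, so $\frac{\rm d}{{\rm d}t}\int_\Omega\rho^h\,{\rm d}\Omega = \tfrac{1}{2}(\rho_1-\rho_2)\frac{\rm d}{{\rm d}t}\int_\Omega\phi^h\,{\rm d}\Omega = 0$ follows directly from the phase conservation just established; alternatively one can derive it from the mass-averaged form, but the affine-in-$\phi$ route is cleanest in this volume-averaged setting. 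The only real subtlety — and the step I would be most careful about — is confirming that the required test functions ($\psi^h = 1$ and $q^h = \divg\bu^h$) actually lie in the discrete spaces and that the boundary terms vanish; both are guaranteed by the divergence-conforming NURBS construction of Evans and Hughes together with the imposed boundary conditions, so no genuine obstacle arises, only the need to invoke these properties precisely.
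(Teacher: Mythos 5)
Your proof is correct and follows essentially the same route as the paper: testing with $\psi^h=1$ for phase conservation, invoking the affine dependence of $\rho^h$ on $\phi^h$ for mass conservation, and substituting $q^h=\divg\bu^h$ (admissible by the divergence-conforming construction) for the pointwise incompressibility. In fact you supply a detail the paper's proof leaves implicit, namely that the convective term $\left(1,\bu^h\cdot\nabla\phi^h\right)_\Omega$ vanishes only after using $\divg\bu^h\equiv 0$ together with the no-penetration boundary condition.
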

\begin{proof}
1. Phase conservation follows from taking $\psi^h = 1$ in the phase field equation \eqref{eq: model weak semi-discrete: PF}. Mass conservation subsequently follows from the observation that $\rho^h$ is an affine function of $\phi^h$.\\

2. Since $\nabla \cdot \bu^h \in \mathcal{Q}^h_T$ we can substitute this weighting function choice into \eqref{eq: model weak semi-discrete: cont}:
\begin{align}
0= \left(  \divg \bu^h, \divg \bu^h\right)_\Omega \quad \Rightarrow \quad  \divg \bu^h \equiv 0 \quad \text{in}~\Omega.
\end{align}
\end{proof}

\begin{rmk}[Energy-dissipation]
  The formulation \eqref{eq: model weak semi-discrete} does not inherit the energy-dissipation property of \cref{thm: energy balance}. This is a consequence of the observation that the weighting function that would lead to an energy-dissipation statement is not a member of $\WW^h$.
\end{rmk}

\subsection{Temporal discretization}\label{sec: num method; subsec: time discr}
To introduce the time-discretization we first subdivide the time domain $\mathcal{T}$ into elements $\mathcal{T}_n=(t_n,t_{n+1})$ of size $\Delta t_n=t_{n+1}-t_n$ with time level $n=0,1,...,N$. We make use of the conventional notation of indicating the time level of a discrete quantity with a subscript, e.g. $\bu^h_n, p^h_n, \phi^h_n$ and $\mu^h_n$ denote the discrete velocity $\mathbf{u}^h$, pressure $p^h$, phase field $\phi^h$ and chemical potential $\mu^h$ at time level $n$. We write $[\![\mathbf{a}^h]\!]_{n} := \mathbf{a}^h_{n+1}-\mathbf{a}^h_{n}$ for the jump of the vector quantity $\mathbf{a}^h$. The intermediate time-levels and associated time derivatives are given by:
\begin{subequations}\label{eq:  intermediate time-levels}
\begin{align}
  \bu^h_{n+1/2} :=&~ \tfrac{1}{2}(\bu^h_n+\bu^h_{n+1}), & \frac{1}{\Delta t_n}[\![\bu^h]\!]_{n} :=&~ \frac{1}{\Delta t_n}(\bu^h_{n+1}-\bu^h_{n}),\\
  \phi^h_{n+1/2} :=&~ \tfrac{1}{2}(\tilde{\phi}^h_n+\phi^h_{n+1}), & \frac{1}{\Delta t_n}[\![\phi^h]\!]_{n} :=&~ \frac{1}{\Delta t_n}(\phi^h_{n+1}-\tilde{\phi}^h_{n})\\
 \rho^h_{n+1/2} :=&~ \rho(\phi^h_{n+1/2}), & \frac{1}{\Delta t_n}[\![\rho^h]\!]_{n} :=&~ \frac{1}{\Delta t_n}(\rho^h_{n+1}-\rho^h_{n}),\\
 \bJ^h_{n+1/2} :=&~ \bJ(p^h_{n+1/2},\phi^h_{n+1/2},\mu^h_{n+1/2}), &\frac{1}{\Delta t_n}[\![\rho^h\bu^h]\!]_{n} :=&~ \frac{1}{\Delta t_n}\left(\rho^h_{n+1}  \bu^h_{n+1}-\rho^h_{n}  \bu^h_{n}\right),\\
 &&\frac{1}{\Delta t_n}[\![\bJ^h]\!]_{n} :=&~ \frac{1}{\Delta t_n}\left(\bJ^h_{n+1}- \bJ^h_{n}\right),
\end{align}
\end{subequations}
where we define $\tilde{\phi}^h_n = \phi^h_n$ for $|\phi^h_n| \leq 1$, and $\tilde{\phi}^h_n = 1$ (respectively $-1$) for $\phi^h_n >1$ (respectively $\phi^h_n < -1$). This permits working with large density and viscosity ratios.  We additionally define
\begin{subequations}
\begin{align}
\rho^h_{n}:=&~ \rho(\tilde{\phi}^h_{n}), &\rho^h_{n+1} =&~  \rho(\phi^h_{n+1})\\
m^h_{n}:=&~ m(\tilde{\phi}^h_{n}), & m^h_{n+1}:=&~  m(\phi^h_{n+1})\\
\bJ^h_{n} :=&~ \bJ(p^h_n,\tilde{\phi}^h_n,\mu^h_n), &\bJ^h_{n+1} :=&~  \bJ(p^h_{n+1},\phi^h_{n+1},\mu^h_{n+1}),\\
{W'}^h_{n+1/2} =&~ {W'}(\phi_{n+1/2}^h)&&\\
 \mathbf{D}^h_{n+1/2} =&~ \nabla^s \left(\bu^h_{n+1/2} + (\rho^h_{n})^{-1} \bJ^h_{n+1/2}\right)\\
\boldsymbol{\tau}^h_{n+1/2}=&~ \nu(\phi^h_{n+1/2}) \left( 2\mathbf{D}^h_{n+1/2}+\lambda{\rm div}\left((\rho^h_{n})^{-1}\bJ^h_{n+1/2}\right) \mathbf{I}\right),
\end{align}
\end{subequations}
Furthermore, the separate pressure and chemical potential terms are taken at time level $n+1$. The temporal discretization of the remaining terms uses the midpoint scheme. The method in fully-discrete form now reads:\\

\textit{Given $\bu_n^h, p^h_n, \phi^h_n$ and $\mu_n^h$, find $\bu_{n+1}^h, p_{n+1}^h, \phi_{n+1}^h$ and $\mu_{n+1}^h$ such that for all $(\bw^h, q^h, \psi^h, \zeta^h) \in \WW_{0,h}$:}
\begin{subequations}\label{eq: fully discrete algorithm}
\begin{align}
\left(\bw^h,  \dfrac{[\![\rho^h\bu^h+\mathbf{J}^h]\!]_{n}}{\Delta t_n} \right)_\Omega - (\divg \bw^h, p^h_{n+1})_\Omega +\left(\bw^h, \phi^h_{n+1/2} \nabla \mu^h_{n+1}\right)_\Omega &\nn\\
-(\nabla \bw^h, \rho^h_{n+1/2} \bu^h_{n+1/2} \otimes  \bu^h_{n+1/2} + \bu^h_{n+1/2} \otimes  \bJ^h_{n+1/2} + \bJ^h_{n+1/2} \otimes  \bu^h_{n+1/2})_\Omega &\nn\\
+ ((\rho^h_{n})^{-1}\bJ^h_{n+1/2} \otimes  \bJ^h_{n+1/2})_\Omega 
+ (\nabla \bw^h, \boldsymbol{\tau}^h_{n+1/2})_\Omega +\frac{1}{\mathbb{F}{\rm r}^2}(\bw^h, \rho^h_{n+1/2} \boldsymbol{\jmath})_\Omega&=~ 0, \label{eq: fully discrete algorithm: mom}\\
  ( q^h, {\rm div} \bu^h_{n+1/2})_\Omega &=~ 0, \label{eq: fully discrete algorithm: continuity}\\ 
 \left( \psi^h, \dfrac{[\![\phi^h]\!]_{n}}{\Delta t_n}  \right)_\Omega + ( \psi^h,  \bu^h_{n+1/2} \cdot\nabla\phi^h_{n+1/2} )_\Omega + \frac{1}{\mathbb{W}{\rm e}}\left( \nabla \psi^h, m^h_{n+1/2}\nabla \left(
  \mu^h_{n+1} + \alpha p^h_{n+1}\right)\right)_\Omega &=~ 0, \label{eq: fully discrete algorithm: phi}\\
  \left(\zeta^h, \mu^h_{n+1}\right)_\Omega- \frac{\mathbb{C}{\rm n}}{\mathbb{W}{\rm e}}\left(\nabla \zeta^h, \nabla \phi^h_{n+1/2}\right)_\Omega - \frac{1}{\mathbb{C}{\rm n}\mathbb{W}{\rm e}} \left(\zeta^h, {W'}^h_{n+1/2}\right)_\Omega &=~ 0 \label{eq: fully discrete algorithm: mu}.
\end{align}
\end{subequations}

\begin{thm}[Fully-discrete divergence-free velocities]\label{thm: fully-discrete}
 The algorithm \eqref{eq: fully discrete algorithm} has pointwise divergence-free solutions:
  \begin{align}\label{eq: fully discrete div conf}
    \divg \bu^h_{n+1/2} \equiv 0.
  \end{align}
\end{thm}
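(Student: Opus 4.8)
The plan is to mirror, essentially verbatim, the argument used for the second property of \cref{thm: energy dissipation modified model cont}, now applied at the half time step. First I would note that $\bu^h_{n+1/2} = \tfrac12(\bu^h_n + \bu^h_{n+1})$ is an affine combination of members of $\mathcal{V}^h_T$ and hence itself lies in the discrete velocity space. Because the NURBS spaces are constructed to be divergence-conforming (see Evans and Hughes \cite{Evans13unsteadyNS}), the divergence operator maps the discrete velocity space into the discrete pressure space; in particular $\divg \bu^h_{n+1/2}$ is an admissible choice for the weighting function $q^h$ in the discrete continuity equation \eqref{eq: fully discrete algorithm: continuity}.

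Second, I would substitute $q^h = \divg \bu^h_{n+1/2}$ into \eqref{eq: fully discrete algorithm: continuity}, which gives
\begin{align}
0 = \left( \divg \bu^h_{n+1/2}, \divg \bu^h_{n+1/2} \right)_\Omega = \| \divg \bu^h_{n+1/2} \|^2_{L^2(\Omega)},
\end{align}
and therefore $\divg \bu^h_{n+1/2} \equiv 0$ in $\Omega$, which is \eqref{eq: fully discrete div conf}. No other equation of the fully-discrete system enters the argument, and the conclusion holds stage-by-stage for every $n$.

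The only point needing care --- the analogue of the ``main obstacle'' --- is the claim that $\divg \bu^h_{n+1/2}$ belongs to the discrete pressure test space. This is precisely the defining feature of the divergence-conforming velocity/pressure pair: the two spaces form part of a discrete de Rham complex so that $\divg$ maps $\mathcal{V}^h$ onto $\mathcal{Q}^h$ (up to the usual treatment of constants for enclosed flows). Since the midpoint rule is linear in the stage unknowns, no new function space is introduced at $t_{n+1/2}$, so the inclusion carries over to $\bu^h_{n+1/2}$. If one additionally wants $\divg \bu^h_{n+1} \equiv 0$, it follows by induction from $\divg \bu^h_0 \equiv 0$ together with the identity $\divg \bu^h_{n+1} = 2\,\divg \bu^h_{n+1/2} - \divg \bu^h_n$, but this is not part of the present statement.
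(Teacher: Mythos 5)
Your proposal is correct and follows exactly the route the paper intends: the paper states that the proof is analogous to the semi-discrete case, i.e.\ one uses that $\divg \bu^h_{n+1/2}$ lies in the discrete pressure space (by the divergence-conforming construction) and substitutes it as the test function $q^h$ in \eqref{eq: fully discrete algorithm: continuity} to conclude $\bigl\| \divg \bu^h_{n+1/2} \bigr\|^2_{L^2(\Omega)} = 0$. Your additional remarks on the de Rham structure and the induction for $\divg \bu^h_{n+1}$ are consistent but not required for the stated result.
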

The proof is analogously to the semi-discrete case.

\clearpage

\section{Comparison with existing models}\label{sec: verification}

In this section we verify the computational setup via two-dimensional buoyancy-driven rising bubble problems. The simulation of bubble dynamics problems involves all aspects of the Navier-Stokes Cahn-Hilliard model (i.e. inertia, viscous forces, gravity, and surface tension effects).
In this benchmark problem a bubble of (lighter) fluid $2$ with initial diameter $D_0 = 2R_0 = 0.5$ is placed in the rectangular domain $[0,1]\times[0,2]$ at location $(0.5,0.5)$ in (heavier) fluid $1$ \cite{hysing2009quantitative}. The initial phase field profile is:
\begin{align}\label{eq: init phi 2D}
  \phi^h_0(\mathbf{x}) = \tanh{\dfrac{\sqrt{(x-0.5)^2+(y-0.5)^2}-R_0}{\mathbb{C}{\rm n}\sqrt{2}}}.
\end{align}
At the left and right boundaries a no-penetration boundary condition ($\mathbf{u}\cdot \mathbf{n} = 0$) is applied, and at the top and bottom boundary a no-slip boundary condition ($\mathbf{u} = 0$) is enforced. A sketch of the problem setup is given in \cref{fig:sketch 2D rising bubble problem}.

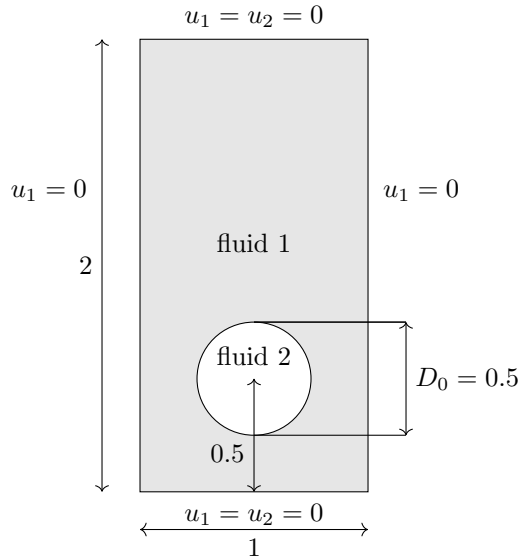
\begin{figure}[h]
\begin{center}
\begin{tikzpicture}
    \fill[fill=gray!20,draw=black] (0, 0) rectangle (3, 6);
    
    \draw[fill=white,draw=black] (1.5, 1.5) circle (0.75);
    
    \draw[<->] (0, -0.5) -- (3, -0.5) node[midway,below] {$1$};
    \draw[<->] (-0.5, 0) -- (-0.5, 6) node[midway,left] {$2$};
    \draw[<->] (3.5, 0.75) -- (3.5, 2.25) node[midway,right] {$D_0 = 0.5$};
    \draw[-] (1.5, 0.75) -- (3.5, 0.75);
    \draw[-] (1.5, 2.25) -- (3.5, 2.25);
    \draw[<->] (1.5, 0.0) -- (1.5, 1.5) node[midway,below left] {$0.5$};
    \draw[-] (1.5, 2.25) -- (3.5, 2.25);
    \node at (1.5, 6.3) {$u_1 = u_2 = 0$}; 
    \node at (1.5, -0.3) {$u_1 = u_2 = 0$};
    \node at (-1.2, 4.0) {$u_1 = 0$}; 
    \node at (3.7, 4.0) {$u_1 = 0$}; 
    \node at (1.5, 1.8) {fluid 2};
    \node at (1.5, 3.3) {fluid 1}; 
\end{tikzpicture}
    \caption{Situation sketch cases 1 and 2}
    \label{fig:sketch 2D rising bubble problem}
\end{center}
\end{figure}

The motion of rising bubble problems is typically described by the Archimedes number ($\mathbb{A}{\rm r}$) and the E\"{o}tv\"{o}s number ($\mathbb{E}{\rm o}$) since the Reynolds number is a priori unknown. The Archimedes number measures the relative importance of buoyancy to viscous forces, and the E\"{o}tv\"{o}s number describes the relative importance of gravity and surface tension. The E\"{o}tv\"{o}s number is also known as the Bond number. 
We select as reference values (see \eqref{eq: ref values}):
\begin{subequations}
    \begin{align}
        X_0 =&~ D_0 \\
        U_0 =&~ \dfrac{D_0}{T_0} \\
        T_0 =&~ \sqrt{\dfrac{\rho_1 D_0^3}{\sigma}},
    \end{align}
\end{subequations}
where $T_0$ is the capilary time scale. As a consequence the dimensionless numbers (\cref{subsec: non-dim form}) become:
\begin{subequations}
    \begin{align}
        \mathbb{R}{\rm e} =&~ \mathbb{E}{\rm o}^{-1/2}\mathbb{A}{\rm r},\\
        \mathbb{F}{\rm r} =&~ \mathbb{E}{\rm o}^{-1/2},\\
        \mathbb{W}{\rm e} =&~ 1,
    \end{align}
\end{subequations}
where the Archimedes number ($\mathbb{A}{\rm r}$) and the E\"{o}tv\"{o}s number ($\mathbb{E}{\rm o}$) are given by:
\begin{subequations}\label{eq: dimensionless quantities 2}
\begin{align}
     \mathbb{A}{\rm r} =&~ \dfrac{\rho_1\sqrt{g D_0^3}}{\nu_1},\\
     \mathbb{E}{\rm o} =&~ \frac{\rho_1 g D_0^2}{\sigma}. 
\end{align}
\end{subequations}
The system is now characterized by $5$ dimensionless quantities: $\mathbb{A}{\rm r}$, $\mathbb{E}{\rm o}$, $\mathbb{C}{\rm n}$, $\rho_1/\rho_2$, $\nu_1/\nu_2$. The benchmark problem involves two cases which are described by different parameter values in \cref{table: parameters 2D RB cases}. All computations were performed on a rectangular uniform mesh with physical element sizes $h = h_K = 1/16, 1/32, 1/64, 1/128$. The computations employ basis functions that are mostly $C^0$-linear, however every velocity space is enriched to be quadratic $C^1$ in the associated direction. The time step size is taken as $\Delta t_n = 0.128 h$, and the Cahn number as $\mathbb{C}{\rm n} = 1.28 h$.

\begin{table}[htbp]
\centering
\begin{tabularx}{\textwidth}{XXXXXXXXX}
Case & \hspace{0.1cm} $\rho_1$ & \hspace{0.1cm} $\rho_2$ & $\mu_1$ & $\mu_2$ & \hspace{0.1cm} $\sigma$ & \hspace{0.1cm} $g$ & $\mathbb{A}{\rm r}$ & $\mathbb{E}{\rm o}$ \\[4pt]
\hline\\[-6pt]
\hspace{0.5cm}1 & $1000$ & $100$ & $10$ & $1$   & $24.5$ & $0.98$ & $35$ & \hspace{0.05cm} $10$   \\[6pt]
\hspace{0.5cm}2 & $1000$ & \hspace{0.1cm} $1$   & \hspace{0.1cm} $1$  & $0.1$ &$1.96$  & $0.98$ & $35$ & $125$  \\[6pt]
\hline
\end{tabularx}
\caption{Parameters for the two-dimensional rising bubble cases.}
\label{table: parameters 2D RB cases}
\end{table}

\cref{fig: case 1 contours,fig: case 2 contours} show the zero phase field ($\phi=0$) contours for cases 1 and 2, respectively. We see that the deformation of the bubble is rather small in case 1, whereas in case 2 we observe significant deformation. In both cases, there are almost no visible differences between the results of the finest two meshes. For a quantitative comparison with reference results from the literature, we use the the center of mass ($y_b$) and rise velocity ($v_b$) defined as:
\begin{subequations}
    \begin{align}
        y_b :=&~ \dfrac{\int_{\phi < 0} y ~{\rm d}x}{\int_{\phi < 0} ~ {\rm d}x},\\
        v_b :=&~ \dfrac{\int_{\phi < 0} u_2 ~{\rm d}x}{\int_{\phi < 0} ~ {\rm d}x}.
    \end{align}
\end{subequations}

\begin{figure}[!ht]
\captionsetup[subfigure]{justification=centering}
\begin{subfigure}{0.49\textwidth}
\centering
\includegraphics[width=0.65\textwidth]{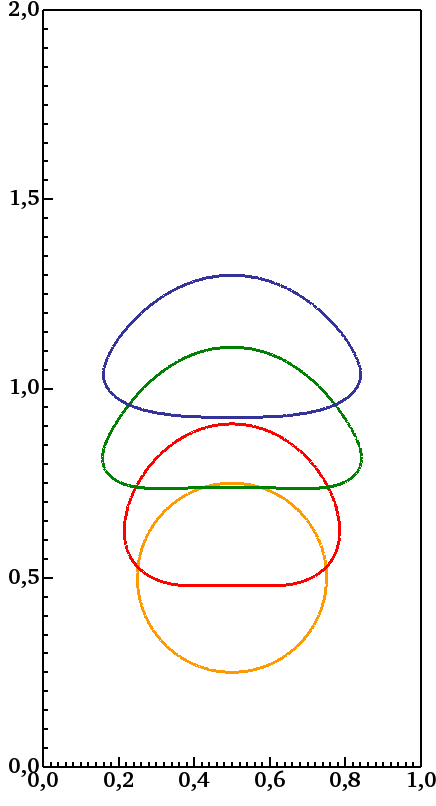}
\caption{$h = 1/128$, $t=0, 1, 2, 3$\\ in orange, red, green and blue (resp).}
\end{subfigure}
\begin{subfigure}{0.49\textwidth}
\centering
\includegraphics[width=0.65\textwidth]{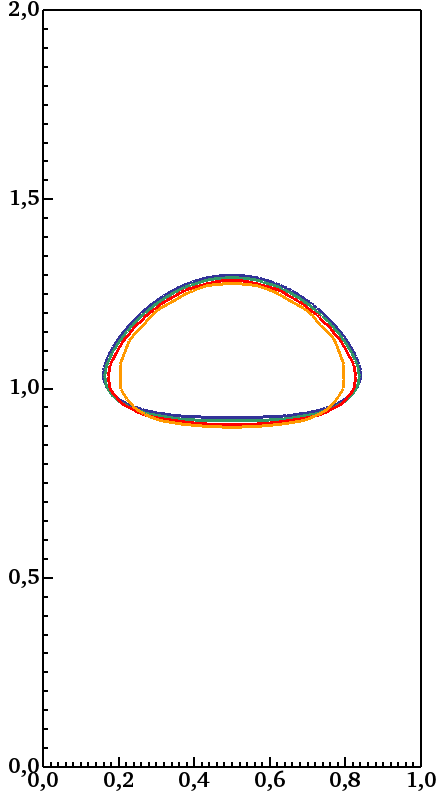}
\caption{$t = 3$, $h = 1/16, 1/32, 1/64, 1/128$\\  in orange, red, green and blue (resp).}
\end{subfigure}
\caption{Case 1. Contours of the phase field $\phi=0$, (a) different time instances, (b) different mesh sizes.}
\label{fig: case 1 contours}
\end{figure}

\newpage
\begin{figure}[!ht]
\captionsetup[subfigure]{justification=centering}
\begin{subfigure}{0.49\textwidth}
\centering
\includegraphics[width=0.65\textwidth]{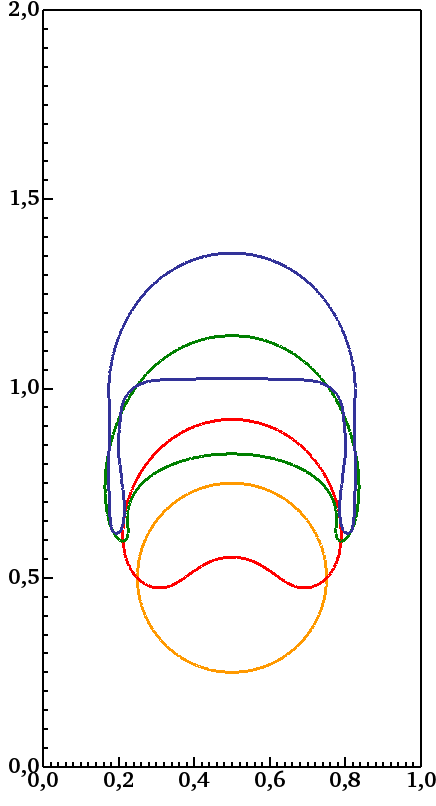}
\caption{$h_K = 1/128$, $t=0, 1, 2, 3$\\ in orange, red, green and blue (resp).}
\end{subfigure}
\begin{subfigure}{0.49\textwidth}
\centering
\includegraphics[width=0.65\textwidth]{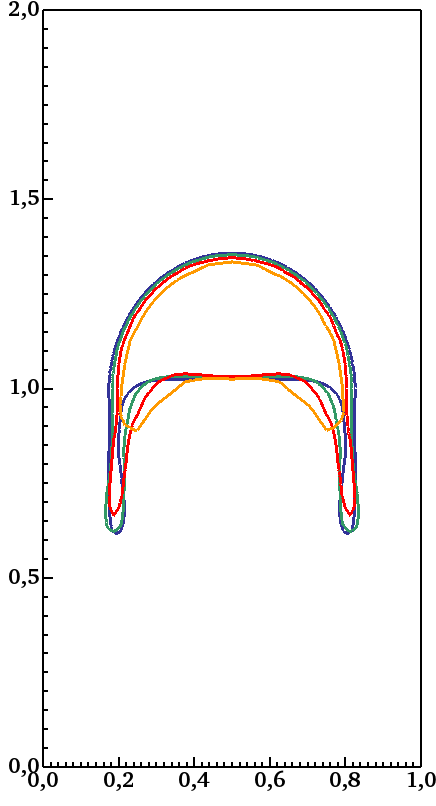}
\caption{$t = 3$, $h_K = 1/16, 1/32, 1/64, 1/128$\\  in orange, red, green and blue (resp).}
\end{subfigure}
\caption{Case 2. Contours of the phase field $\phi=0$, (a) different time instances, (b) different mesh sizes.}
\label{fig: case 2 contours}
\end{figure}

In \cref{fig: case 1 CoM,fig: case 1 RV,fig: case 2 CoM,fig: case 2 RV} we plot for the two cases (i) the center of mass and (ii) the rise velocity, for different mesh sizes, and relative to reference computational data. This data is obtained with (1) the TP2D code (2) the FreeLIFE code, (3) the MooNMD code, see  \cite{hysing2009quantitative}, and the NSCH models of Abels et al. \cite{abels2012thermodynamically}, Boyer \cite{boyer2002theoretical} and Ding et al. \cite{ding2007diffuse}. The computations with the NSCH models were performed by Aland and Voigt \cite{aland2012benchmark}. The center of mass matches well with the reference data for both cases. Concerning the rise velocity, we observe significant difference  for case 2 for $t>1.5$. In this regime, our computational results agree quite well with the NSCH computations performed by Aland and Voigt \cite{aland2012benchmark}, but not with the TP2D, FreeLIFE and MooNMD code results. We remark that both the physical models as well as the computational methods differ between the computational results. The NSCH model is an energy stable model with a diffuse interface, whereas the reference data of the TP2D, FreeLIFE and MooNMD codes is based on a level set description of the interface.  

\begin{figure}[!ht]
\begin{subfigure}{0.49\textwidth}
\centering
\includegraphics[width=0.95\textwidth]{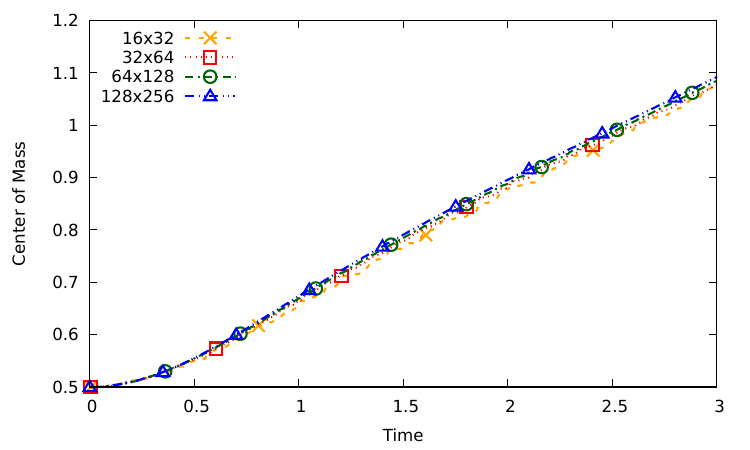}
\caption{$h_K = 1/16, 1/32, 1/64, 1/128$.}
\end{subfigure}
\begin{subfigure}{0.49\textwidth}
\centering
\includegraphics[width=0.95\textwidth]{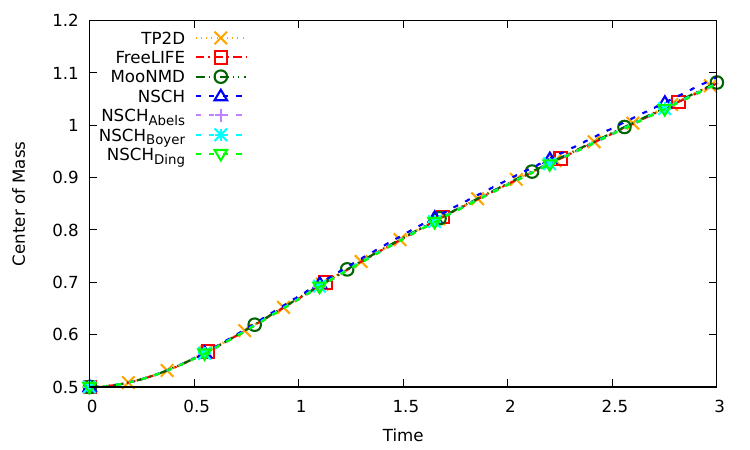}
\caption{Relative to reference data.}
\end{subfigure}
\caption{Case 1. Center of mass (a) for different mesh sizes, and (b) a comparison of the finest mesh results to reference data.}
\label{fig: case 1 CoM}
\end{figure}

\clearpage

\begin{figure}[!ht]
\begin{subfigure}{0.49\textwidth}
\centering
\includegraphics[width=0.95\textwidth]{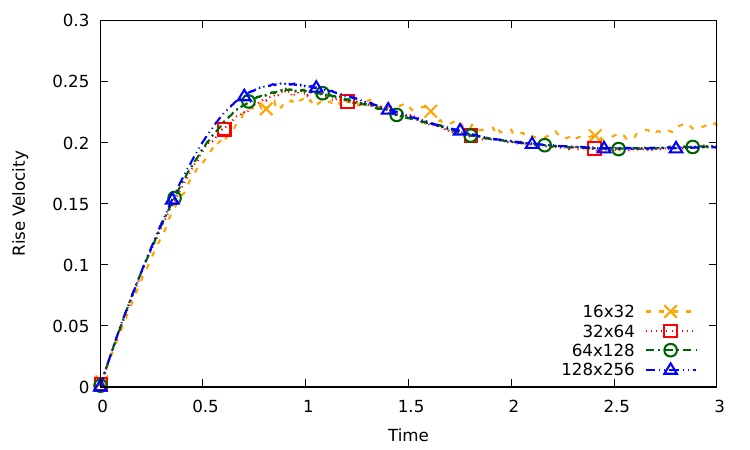}
\caption{$h_K = 1/16, 1/32, 1/64, 1/128$.}
\end{subfigure}
\begin{subfigure}{0.49\textwidth}
\centering
\includegraphics[width=0.95\textwidth]{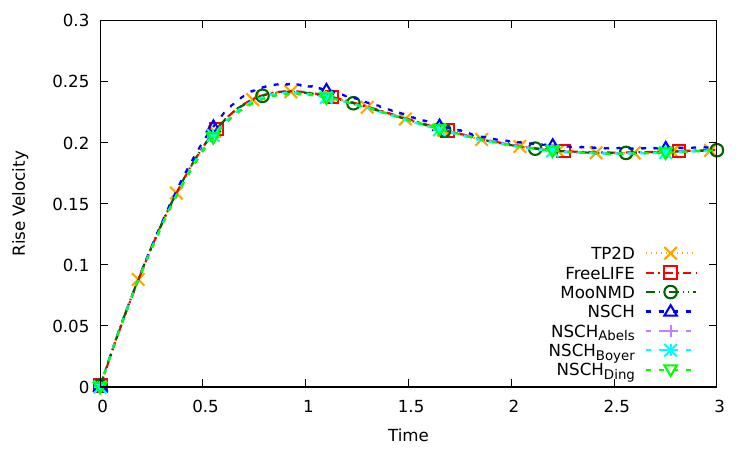}
\caption{Relative to reference data.}
\end{subfigure}
\caption{Case 1. Rise velocity (a) for different mesh sizes, and (b) a comparison of the finest mesh results to reference data.}
\label{fig: case 1 RV}
\end{figure}


\begin{figure}[!ht]
\begin{subfigure}{0.49\textwidth}
\centering
\includegraphics[width=0.95\textwidth]{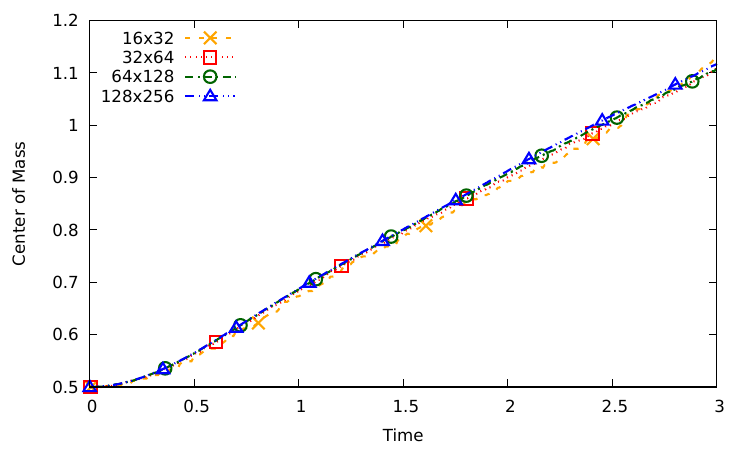}
\caption{$h = 1/16, 1/32, 1/64, 1/128$.}
\end{subfigure}
\begin{subfigure}{0.49\textwidth}
\centering
\includegraphics[width=0.95\textwidth]{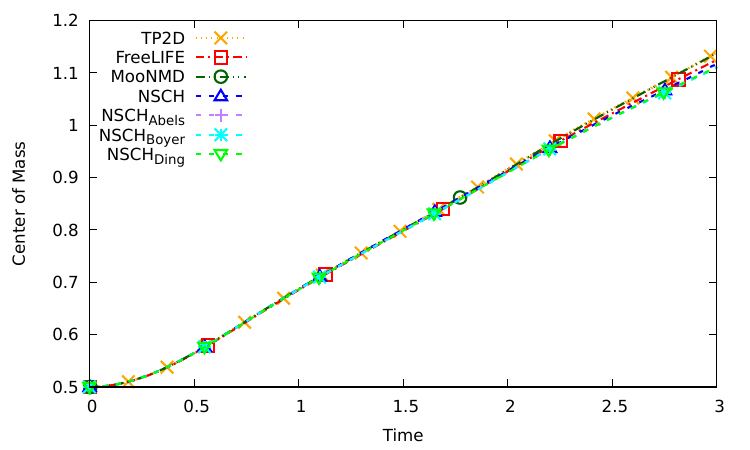}
\caption{Relative to reference data.}
\end{subfigure}
\caption{Case 2. Center of mass (a) for different mesh sizes, and (b) a comparison of the finest mesh results to reference data.}
\label{fig: case 2 CoM}
\end{figure}

\begin{figure}[!ht]
\begin{subfigure}{0.49\textwidth}
\centering
\includegraphics[width=0.95\textwidth]{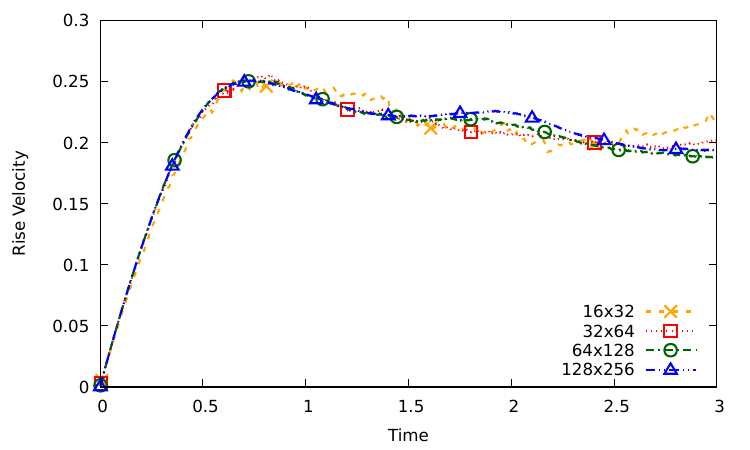}
\caption{$h_K = 1/16, 1/32, 1/64, 1/128$.}
\end{subfigure}
\begin{subfigure}{0.49\textwidth}
\centering
\includegraphics[width=0.95\textwidth]{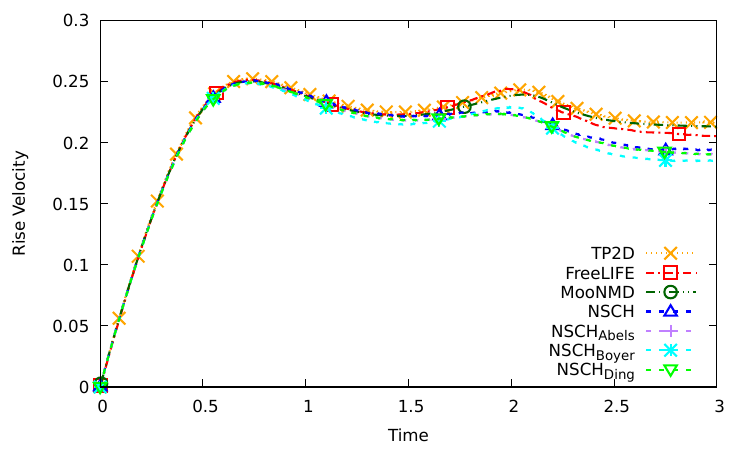}
\caption{Relative to reference data.}
\end{subfigure}
\caption{Case 2. Rise velocity (a) for different mesh sizes, and (b) a comparison of the finest mesh results to reference data.}
\label{fig: case 2 RV}
\end{figure}




\clearpage 

\section{Numerical benchmarks in three dimensions}\label{sec: validation}

In this section we simulate two benchmark problems to validate our consistent NSCH model against experimental data. We first study a buoyancy-driven rising bubble, and second the contraction of a liquid filament. Analogously to \cref{sec: verification}, all the implementations use basis functions that are mostly $C^0$-linear, but every velocity space is enriched to be quadratic $C^1$ in the associated direction. The system of equations is solved with the standard GMRES method with additive Schwartz preconditioning
provided by Petsc \cite{petsc-efficient}.

\subsection{Three-dimensional buoyancy-driven rising bubbles}

The simulation of the three dimensional rising bubble problem is similar to the two-dimensional setup in \cref{sec: verification}.
In particular we use the same reference values and definitions of the dimensionless parameters. In this validation case we consider an air bubble at $20^{\circ}{\rm C}$ in the water. The deforming bubble rises due to buoyancy, and after some time takes its final shape and velocity. The physical domain setup coincides with that of Yan et al. \cite{yan2019isogeometric}: the bubble with initial diameter $D_0 = 2 R_0 = 1$ is placed in the rectangular domain $[0,12]\times[0,24]\times[0,12]$ at location $(6,10.5,6)$. The initial phase field profile is:
\begin{align}\label{eq: init phi 3D RB}
  \phi^h_0(\mathbf{x}) = \tanh{\dfrac{\sqrt{(x-6)^2+(y-10.5)^2 + (z-6)^2}-R_0}{\mathbb{C}{\rm n}\sqrt{2}}}.
\end{align}
At all boundaries a no-penetration boundary condition ($\mathbf{u}\cdot \mathbf{n} = 0$) is applied. A sketch of the problem setup is given in \cref{fig:sketch 3D rising bubble problem: setup}.

\begin{figure}[!h]
    \centering
\begin{subfigure}{0.67\textwidth}
\begin{tikzpicture}[scale=4, transform shape]
    \coordinate (A) at (0,0,0);
    \coordinate (B) at (1,0,0);
    \coordinate (C) at (1,0,1);
    \coordinate (D) at (0,0,1);
    \coordinate (E) at (0,2,0);
    \coordinate (F) at (1,2,0);
    \coordinate (G) at (1,2,1);
    \coordinate (H) at (0,2,1);

    \fill[gray!20] (A) -- (B) -- (C) -- (D) -- cycle;
    \fill[gray!20] (E) -- (F) -- (G) -- (H) -- cycle;
    \fill[gray!20] (B) -- (F) -- (G) -- (C) -- cycle;
    \fill[gray!20] (A) -- (E) -- (H) -- (D) -- cycle;
    \fill[gray!20] (A) -- (B) -- (F) -- (E) -- cycle;
    
    \draw [line width=0.35mm, red!70!black!80, dashed] (A) -- (B) ;
    \draw [line width=0.35mm, red!70!black!80, dashed] (A) -- (D) ;
    \draw [line width=0.35mm, red!70!black!80, dashed] (A) -- (E);
    \draw [line width=0.25mm, black] (D) -- (C) ;
    \draw [line width=0.25mm, black] (C) -- (B) ;
    \draw [line width=0.25mm, black] (E) -- (F) -- (G) -- (H) -- cycle;
    \draw [line width=0.25mm, black] (B) -- (F);
    \draw [line width=0.25mm, black] (C) -- (G);
    \draw [line width=0.25mm, black] (D) -- (H);

    \shade[ball color=blue!50, opacity=0.875] (0.5, 0.875, 0.5) circle (0.04166666666);
    \node[scale=0.24]  at (1.6, 0.875, 0.5) {Initial bubble};
    \draw [-Stealth, line width=0.25mm, dotted, blue!50!black!80] (1.25, 0.875, 0.5) -- (0.6, 0.875, 0.5);
    \node[scale=0.24]  at (-0.7, 0.9, 0.5) {Gravity};
    \draw [-Stealth, line width=0.25mm, dashed, green!50!black!80] (-0.5, 1.2, 0.5) -- (-0.5, 0.6, 0.5);
    
\end{tikzpicture}
    \caption{Setup}
    \label{fig:sketch 3D rising bubble problem: setup}
\end{subfigure}
\begin{subfigure}{0.32\textwidth}
\includegraphics[scale = 0.3]{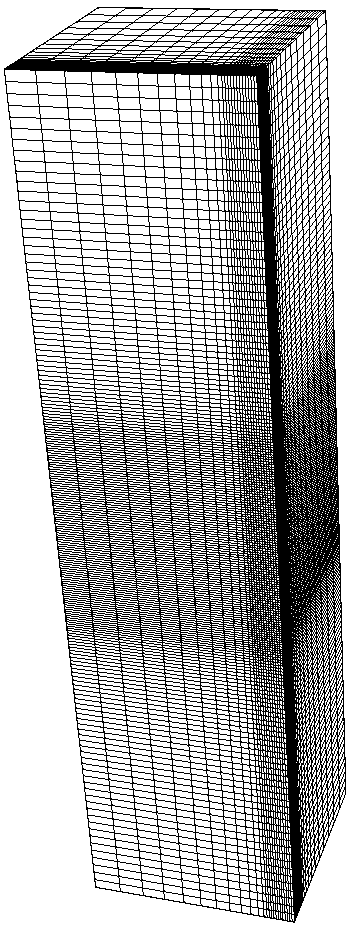}
    \caption{Coarsest mesh, quarter of domain}
    \label{fig:sketch 3D rising bubble problem: mesh}
\end{subfigure}
    \caption{Three-dimensional rising bubble problem.}
    \label{fig:sketch 3D rising bubble problem}
\end{figure}

We note that the problem is symmetric in the planes $x=6$ and $z=6$. In order to reduce the computational effort, we only simulate the quarter $[0, 6] \times [0,24] \times [0,6]$ of the domain and apply symmetry boundary conditions.
To accurately capture the bubble dynamics, we use a stretched single patch mesh with a uniform mesh size $h = {\rm min}_K h_K$ inside the region of the bubble $[5.2, 6] \times [9, 15] \times [5.2, 6]$. The element size gradually increases towards the boundary of the computational domain, see \cref{fig:sketch 3D rising bubble problem: mesh}. 

The system is now characterized by $5$ dimensionless quantities: $\mathbb{A}{\rm r}$, $\mathbb{E}{\rm o}$, $\mathbb{C}{\rm n}$, $\rho_1/\rho_2$ and $\nu_1/\nu_2$.  We consider three different cases, for which the dimensionless parameters are given in \cref{table: parameters 3D RB cases}.

\begin{table}[ht]
\centering
\begin{tabularx}{\textwidth}{XXXXXXXXX}
Case & $\rho_1/\rho_2$ & $\nu_1/\nu_2$ & $\mathbb{A}{\rm r}$ & $\mathbb{E}{\rm o}$ \\[4pt]
\hline\\[-6pt]
\hspace{0.25cm}1 & $1000$ & $100$ & $1.671$ & $17.7$ \\[6pt]
\hspace{0.25cm}2 & $1000$ & $100$ & $15.24$ & $243$   \\[6pt]
\hspace{0.25cm}3 & $1000$ & $100$ & $30.83$ & $339$   \\[6pt]
\hline
\end{tabularx}
\caption{Parameters for the three-dimensional rising bubble cases.}
\label{table: parameters 3D RB cases}
\end{table}

To preclude the influence of the unresolved flow features and domain boundaries, we compare computational results on different meshes. We compare the results of case 2 on a coarse mesh, a medium mesh, and a fine mesh with total number of elements:  $24 \times 192 \times 24 = 110592$,  $48 \times 384 \times 48 = 884736$, and  $96 \times 768 \times 96 = 7077888$, with $h = 1/15, 1/30, 1/60$ respectively. We select the Cahn number as $\mathbb{C}{\rm n}=0.72 h$ and the time step size as $\Delta t_n = 0.075 h$. In \cref{fig: Case B Rising bubble 3D contours} we visualize the final bubble shape for the different meshes. We observe that the influence of the mesh is small when comparing the medium and fine mesh results. \cref{fig: Case B Re} shows the Reynolds number of the bubble $(\rho_1 v_b D_0)/\nu_1$ for the three meshes. The results of the medium and fine mesh show negligible differences. In the following we work with the fine mesh discretization.

\begin{figure}[!ht]
\begin{subfigure}{0.33\textwidth}
\centering
\includegraphics[width=0.80\textwidth]{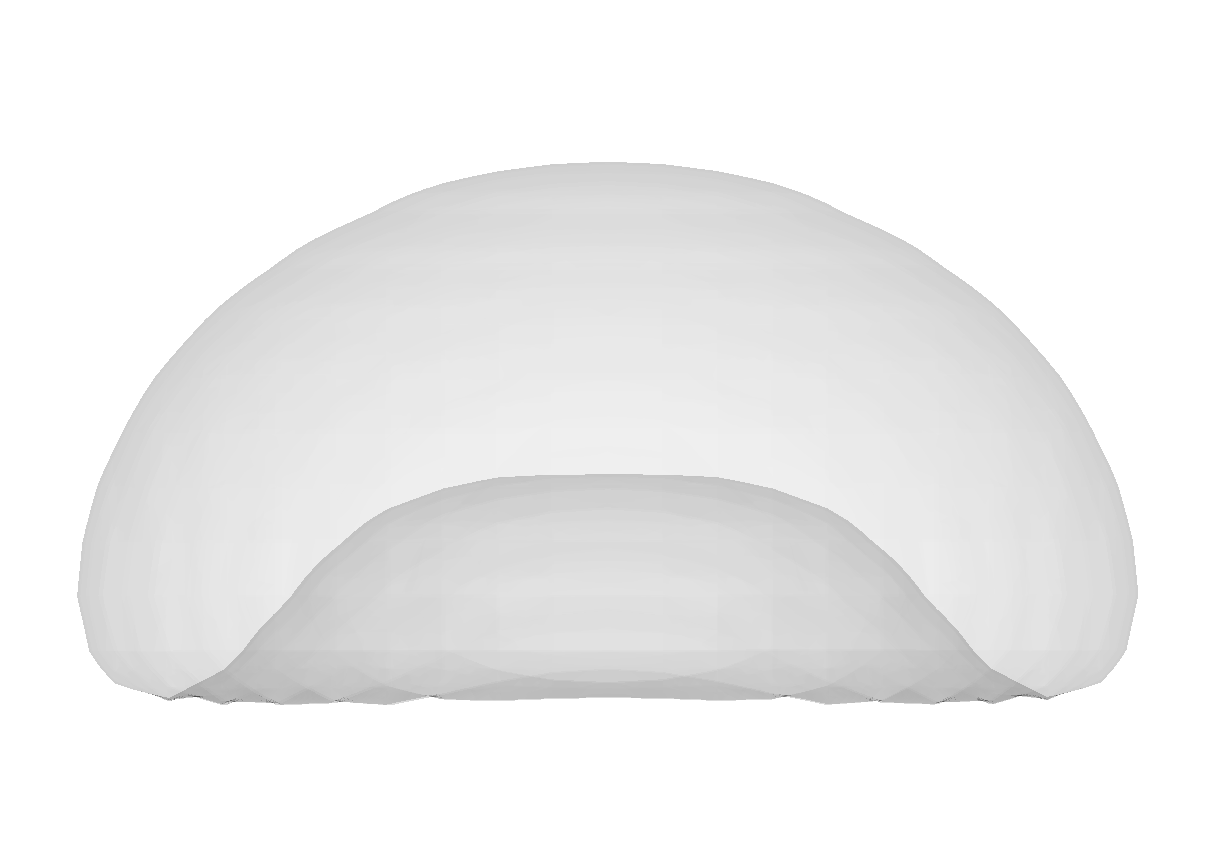}
\caption{Coarse mesh}
\end{subfigure}
\begin{subfigure}{0.33\textwidth}
\centering
\includegraphics[width=0.80\textwidth]{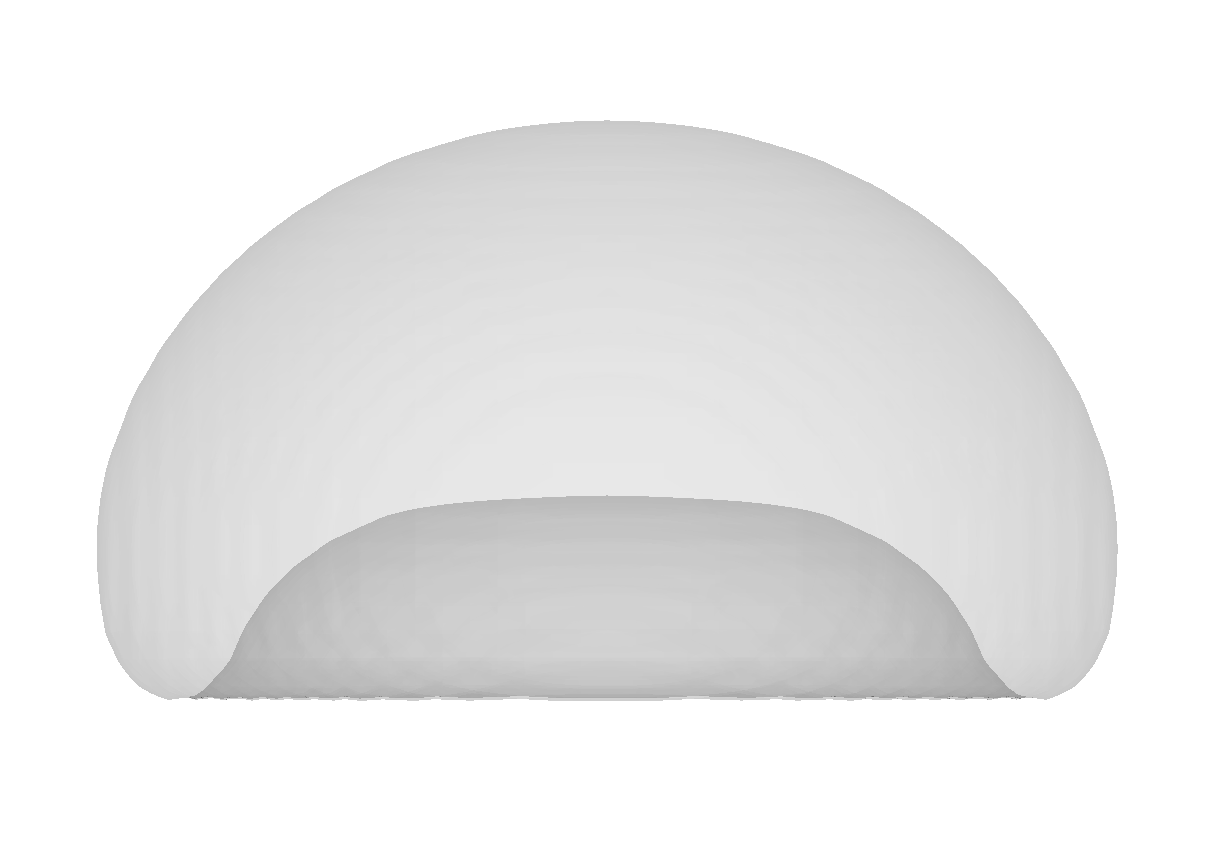}
\caption{Medium mesh}
\end{subfigure}
\begin{subfigure}{0.33\textwidth}
\centering
\includegraphics[width=0.80\textwidth]{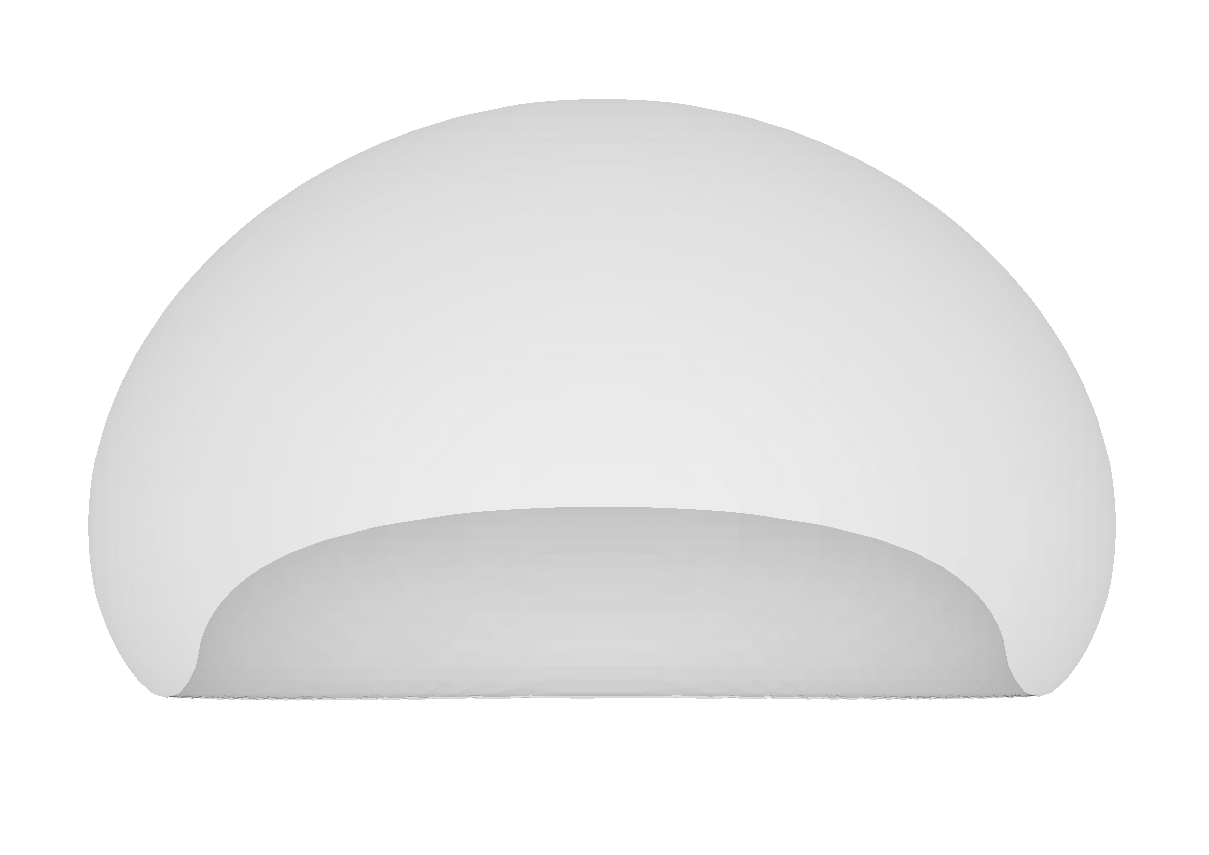}
\caption{Fine mesh}
\end{subfigure}
\caption{Rising bubble problem. Case 2. Final bubble shape for different mesh sizes.}
\label{fig: Case B Rising bubble 3D contours}
\end{figure}

\begin{figure}[!h]
\centering
\includegraphics[width=0.55\textwidth]{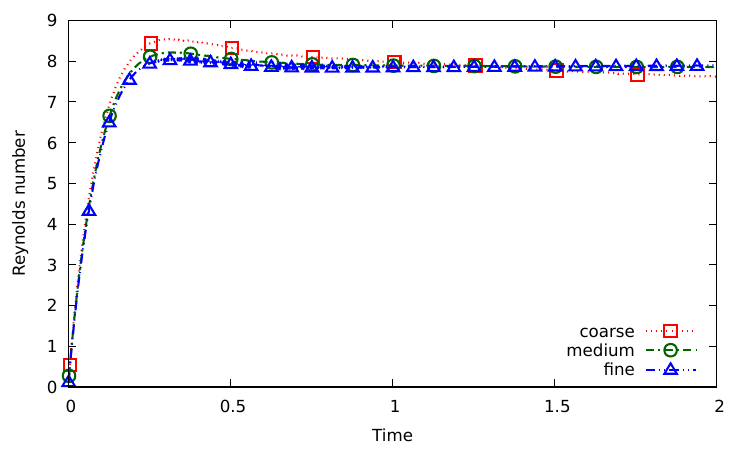}
\caption{Case 2. Reynolds number for different meshes.}
\label{fig: Case B Re}
\end{figure}

\begin{table}[!h]
\centering
\begin{tabularx}{\textwidth}{XXXXXXXXX}
Case                & Bhaga and Weber &  Yan et al. & Hua et al.  & Amaya-Bower and Lee & Current\\[4pt]
\hline\\[-6pt]
\hspace{0.25cm}1    & $0.232$         & $0.2593$    & $0.182$     &                     & $0.211$     \\[6pt]
\hspace{0.25cm}2    & $7.77$          & $7.5862$    & $7.605$     & $6.2$               & $7.90$     \\[6pt]
\hspace{0.25cm}3    & $18.3$          & $18.1717$   & $17.758$    & $15.2$              & $17.5$    \\[6pt]
\hline
\end{tabularx}
\caption{Final Reynolds number for the three-dimensional rising bubble cases. Left to right: Bhaga and Weber \cite{bhaga1981bubbles}, Yan et al. \cite{yan2019isogeometric}, Hua et al. \cite{hua2008numerical}, Amaya-Bower and Lee \cite{amaya2010single} and the current NSCH computation.}
\label{table: final Re}
\end{table}
\newpage
In \cref{table: final Re} we show the final Reynolds numbers of each of the three cases, and visualize the final bubble shape in \cref{fig: rising bubble 3D contours}. The table and figure include experimental data of Bhaga and Weber \cite{bhaga1981bubbles}, and computational data Yan et al. \cite{yan2019isogeometric}, Hua et al. \cite{hua2008numerical} and Amaya-Bower and Lee \cite{amaya2010single}. We observe that the bubble largely remains spherical in case 1, while in cases 2 and 3 a significant deformation is visible. 
In cases 1 and 2, we observe that the terminal Reynolds numbers computed by our methodology show the closest match with the experimental data. For case 3, all computational results show lower terminal Reynolds numbers than the experimental results. Concerning the final bubble shape, we see that bubble shape of our computation matches well with that of Yan et al. \cite{yan2019isogeometric} and the experimental data \cite{bhaga1981bubbles}. Finally, to highlight the deformation differences among the three cases, we visualize the three-dimensional terminal bubble shapes in \cref{fig: rising bubble 3D vol}.




\begin{figure}[!h]
\begin{subfigure}{0.32\textwidth}
\centering
\includegraphics[width=0.95\textwidth]{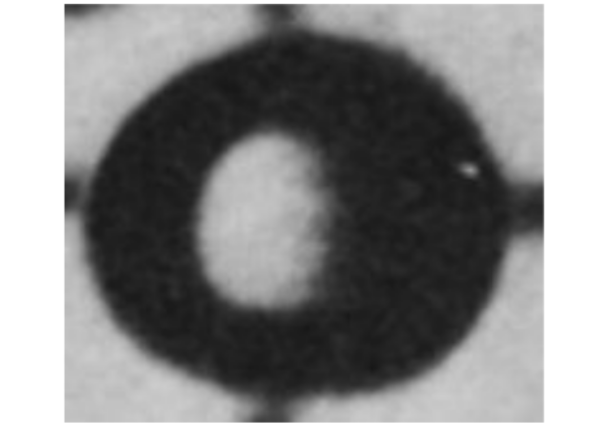}
\end{subfigure}
\begin{subfigure}{0.32\textwidth}
\centering
\includegraphics[width=0.95\textwidth]{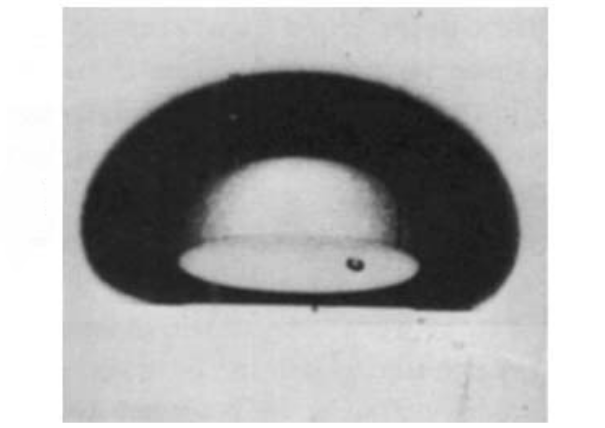}
\end{subfigure}
\begin{subfigure}{0.32\textwidth}
\centering
\includegraphics[width=0.95\textwidth]{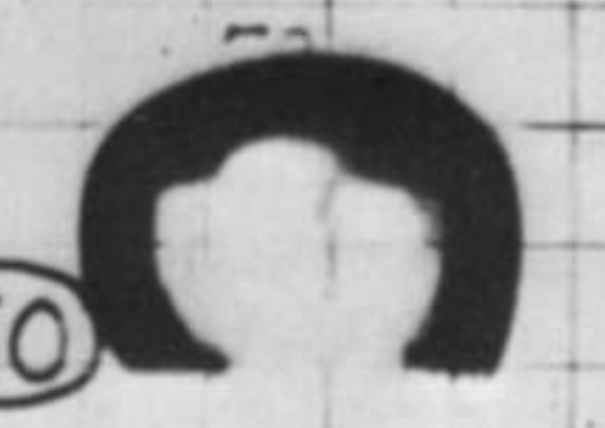}
\end{subfigure}
\begin{subfigure}{0.32\textwidth}
\centering
\includegraphics[width=0.95\textwidth]{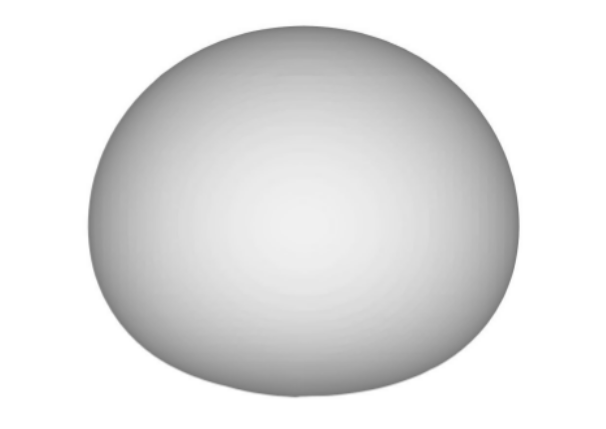}
\end{subfigure}
\begin{subfigure}{0.32\textwidth}
\centering
\includegraphics[width=0.95\textwidth]{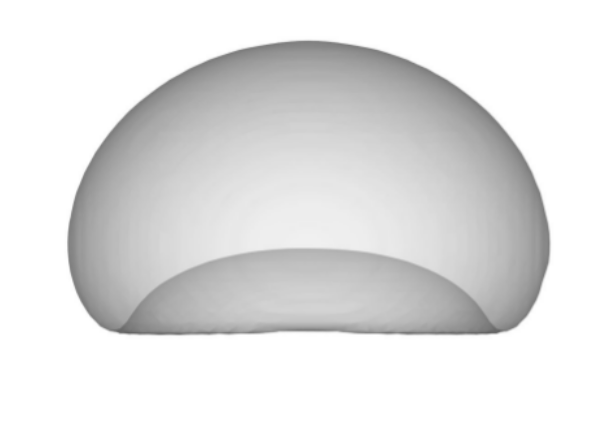}
\end{subfigure}
\begin{subfigure}{0.32\textwidth}
\centering
\includegraphics[width=0.95\textwidth]{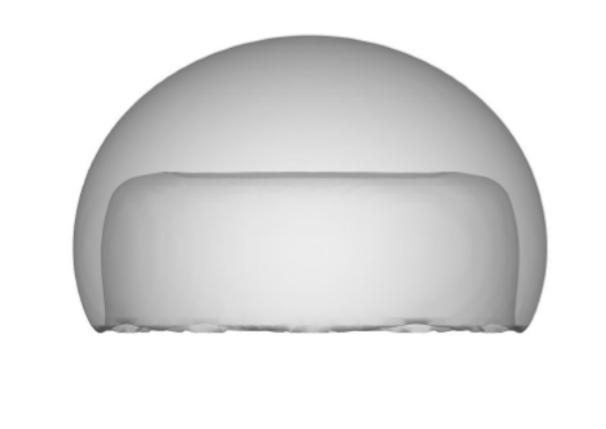}
\end{subfigure}
\begin{subfigure}{0.32\textwidth}
\centering
\includegraphics[width=0.95\textwidth]{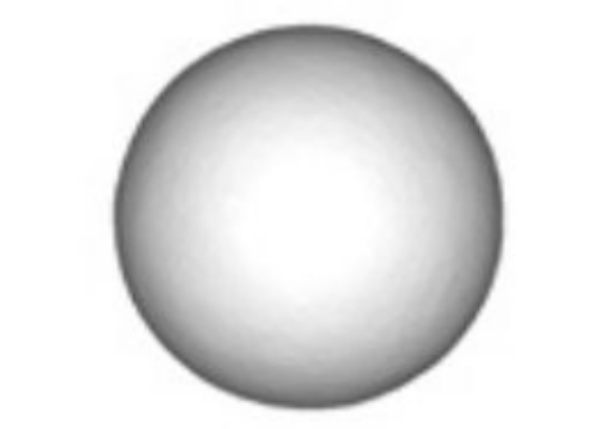}
\end{subfigure}
\begin{subfigure}{0.32\textwidth}
\centering
\includegraphics[width=0.95\textwidth]{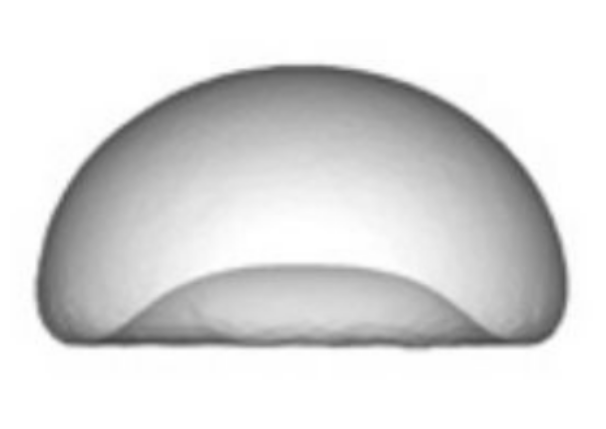}
\end{subfigure}
\begin{subfigure}{0.32\textwidth}
\centering
\includegraphics[width=0.95\textwidth]{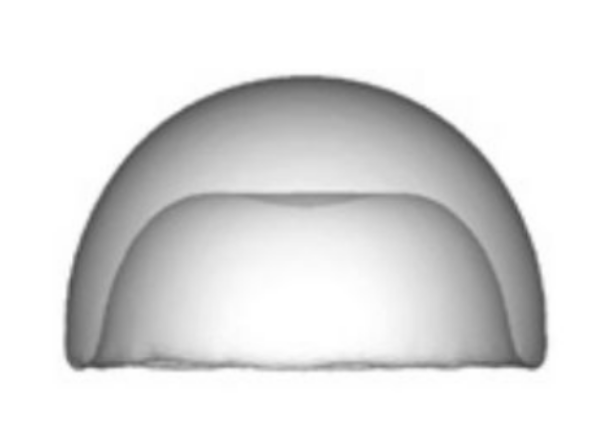}
\end{subfigure}
\begin{subfigure}{0.32\textwidth}
\centering
\includegraphics[width=0.95\textwidth]{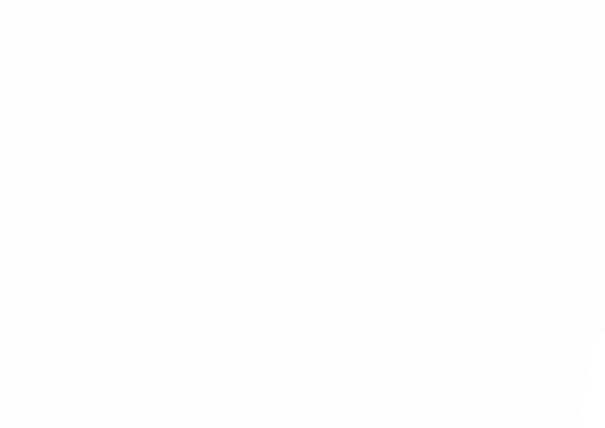}
\end{subfigure}
\begin{subfigure}{0.32\textwidth}
\centering
\includegraphics[width=0.95\textwidth]{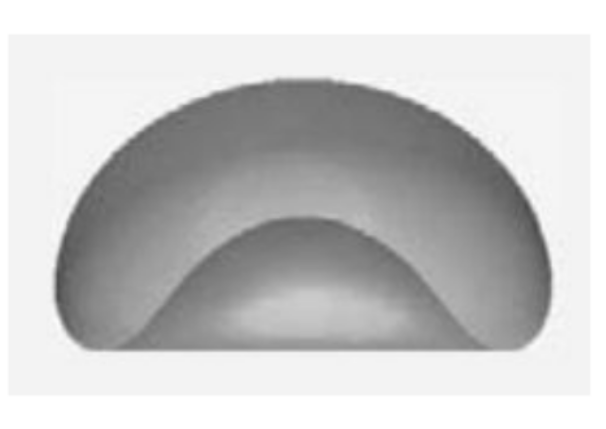}
\end{subfigure}
\begin{subfigure}{0.32\textwidth}
\centering
\includegraphics[width=0.95\textwidth]{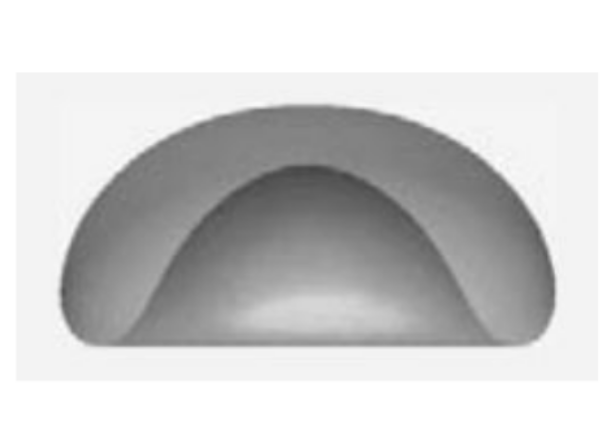}
\end{subfigure}
\begin{subfigure}{0.33\textwidth}
\centering
\includegraphics[width=0.80\textwidth]{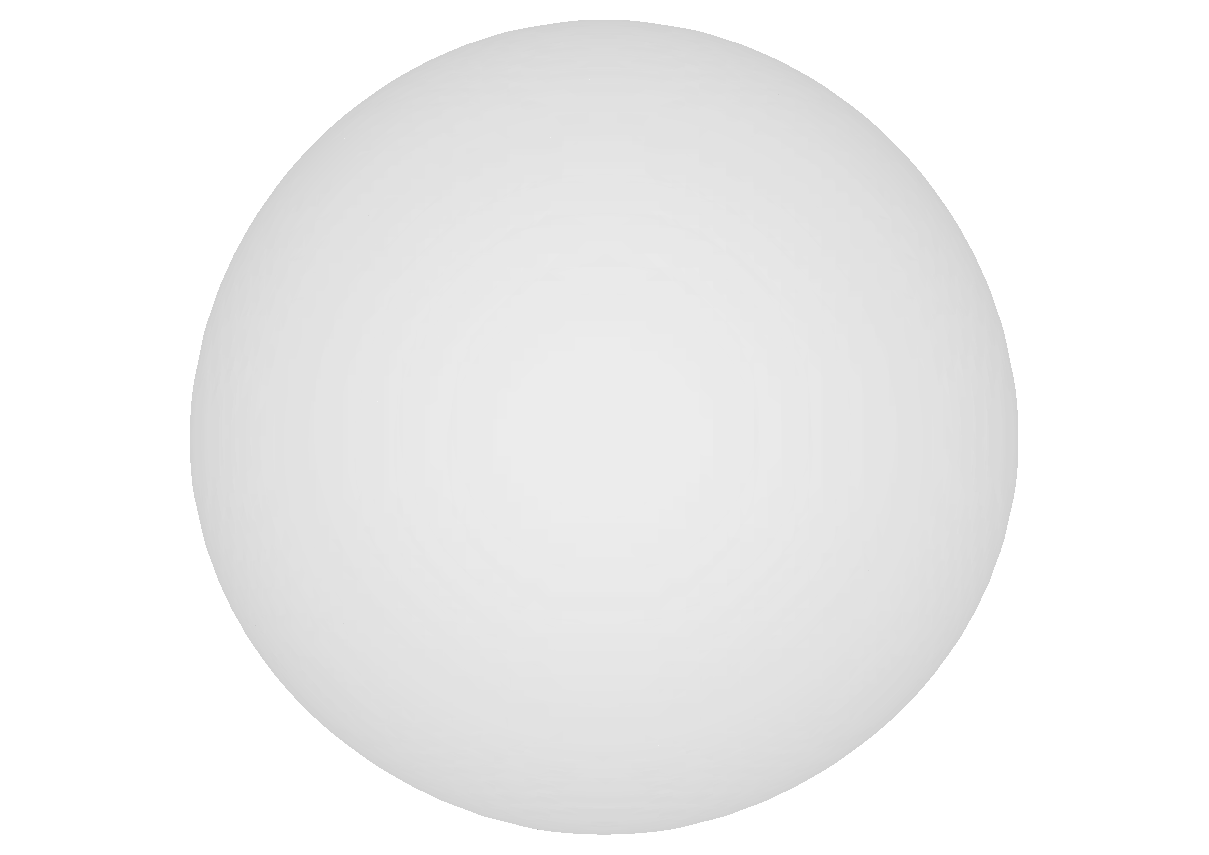}
\caption{Case 1}
\end{subfigure}
\begin{subfigure}{0.33\textwidth}
\centering
\includegraphics[width=0.80\textwidth]{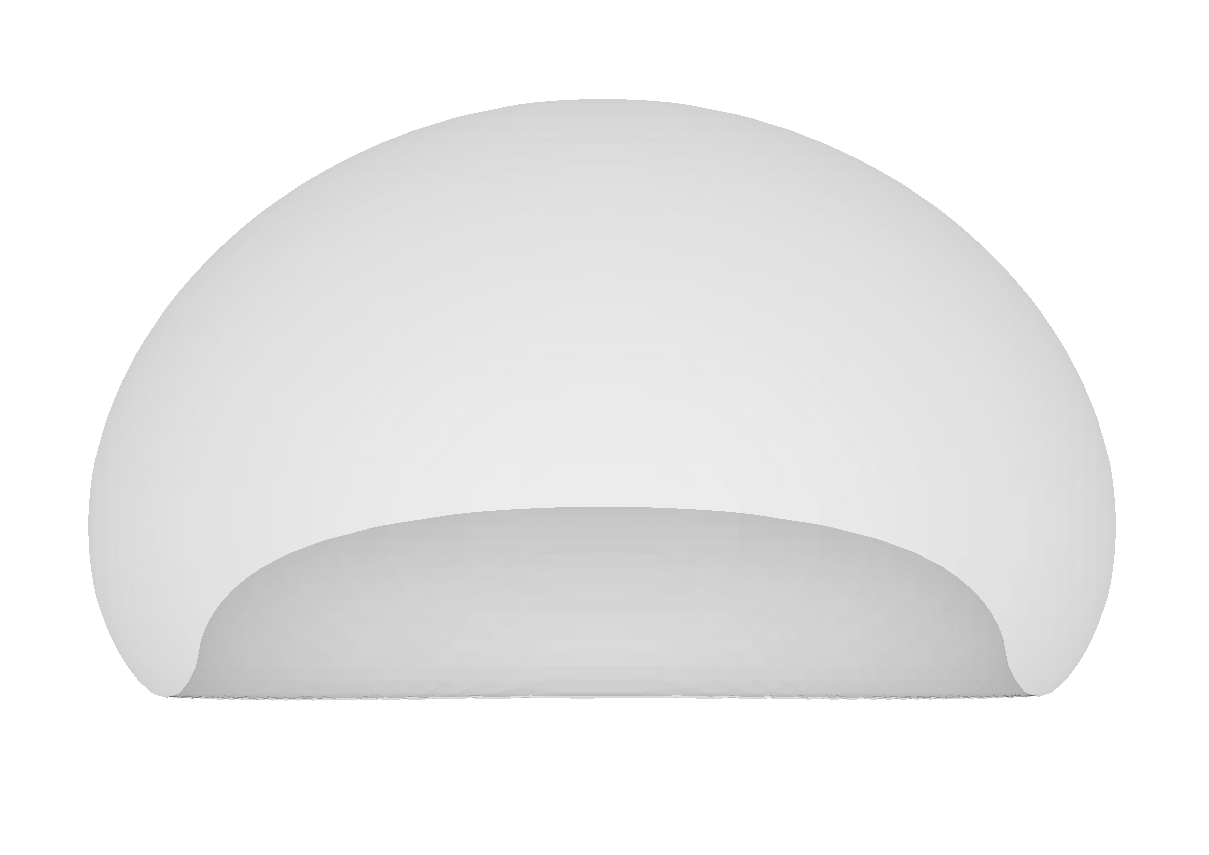}
\caption{Case 2}
\end{subfigure}
\begin{subfigure}{0.33\textwidth}
\centering
\includegraphics[width=0.80\textwidth]{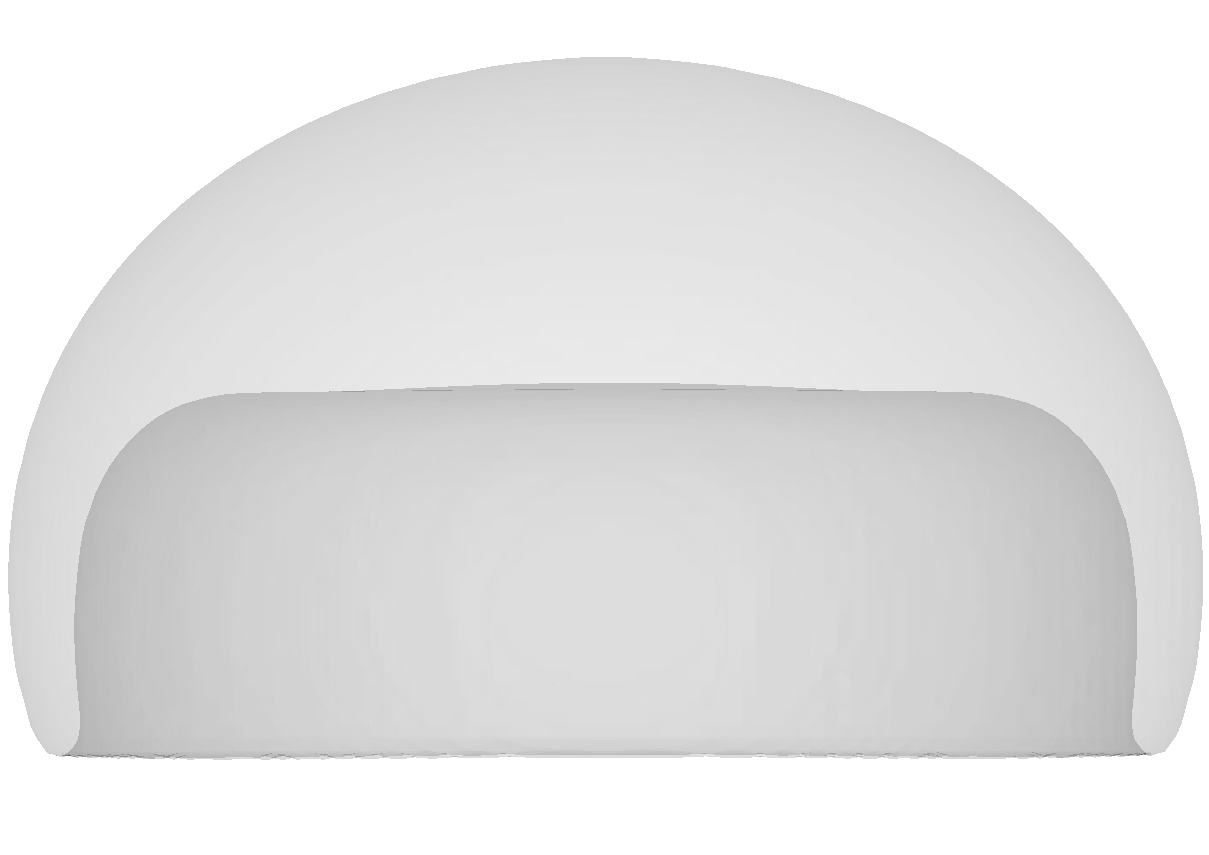}
\caption{Case 3}
\end{subfigure}
\caption{Rising bubble problem. Final bubble shape. Top to bottom: Bhaga and Weber \cite{bhaga1981bubbles}, Yan et al. \cite{yan2019isogeometric}, Hua et al. \cite{hua2008numerical}, Amaya-Bower and Lee \cite{amaya2010single}, and the current NSCH computation.}
\label{fig: rising bubble 3D contours}
\end{figure}

\clearpage

\begin{figure}[!h]
\begin{subfigure}{0.33\textwidth}
\centering
\includegraphics[width=0.80\textwidth]{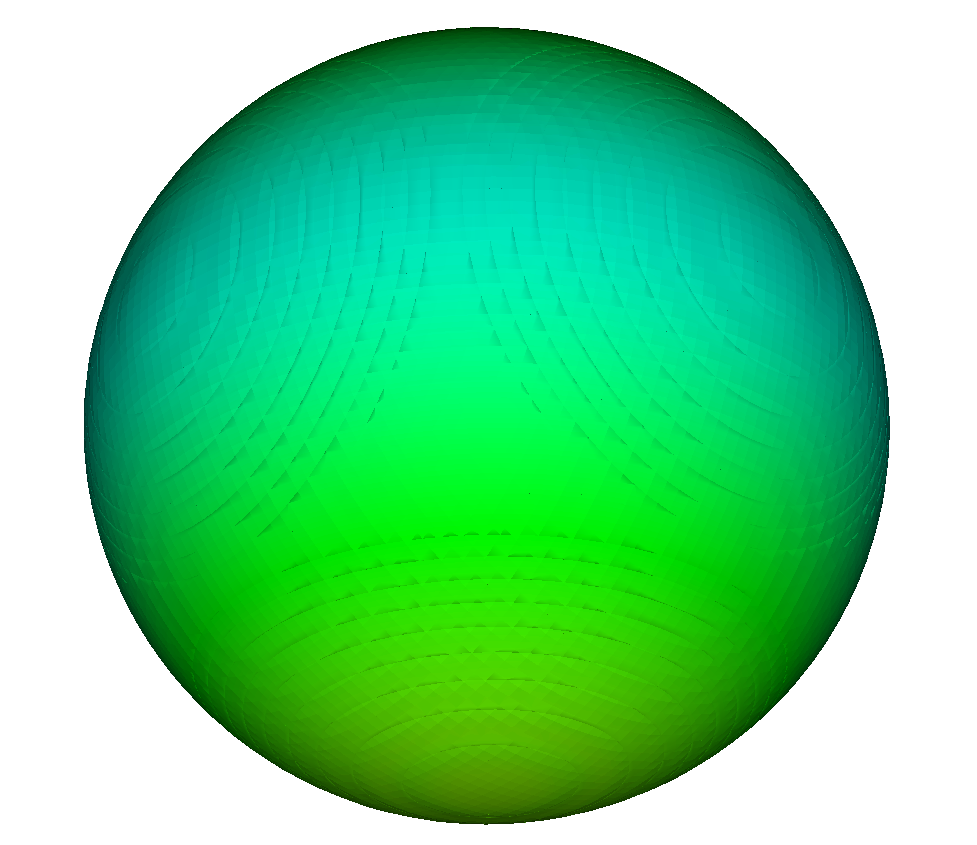}
\caption{Case 1}
\end{subfigure}
\begin{subfigure}{0.33\textwidth}
\centering
\includegraphics[width=0.80\textwidth]{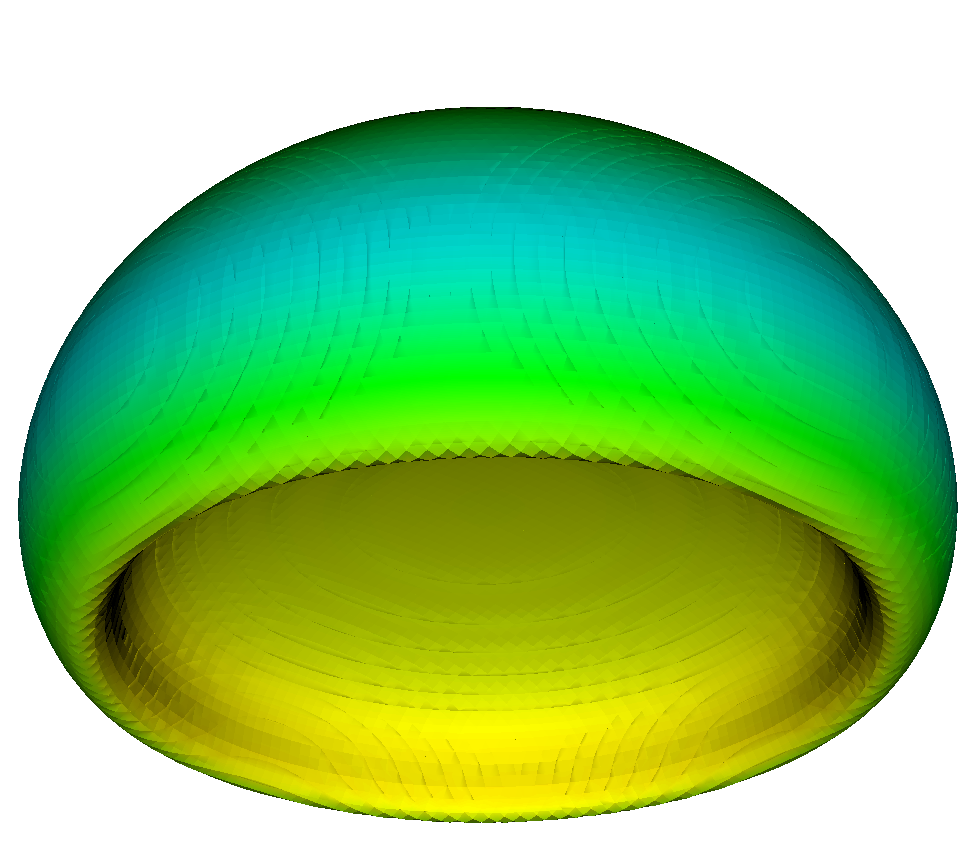}
\caption{Case 2}
\end{subfigure}
\begin{subfigure}{0.33\textwidth}
\centering
\includegraphics[width=0.80\textwidth]{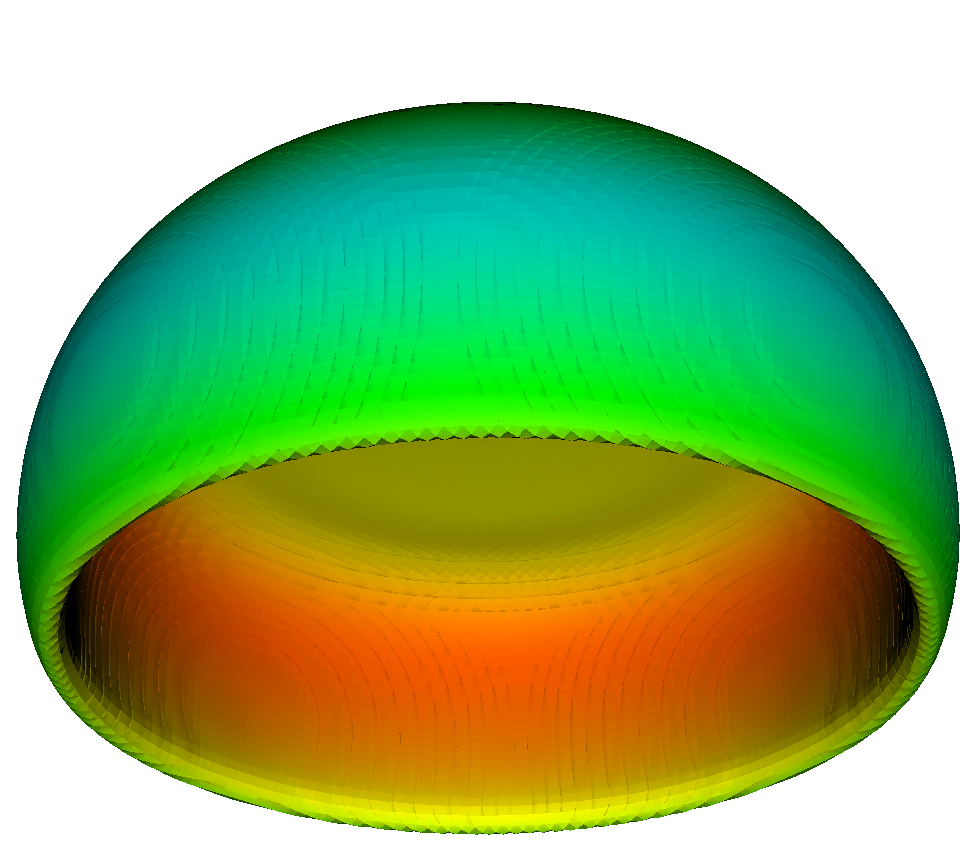}
\caption{Case 3}
\end{subfigure}
\caption{Rising bubble problem. Three-dimensional terminal bubble shape of the NSCH computations.}
\label{fig: rising bubble 3D vol}
\end{figure}

\subsection{Three-dimensional liquid filament contraction}

The dynamics of a contracting liquid filament is a fundamental problem that occurs in a wide range of problems in fluid dynamics, e.g. microfluidics, biological systems, spray, and inkjet printing. In this qualitative application case of a contracting liquid filament we validate our computational results against experimental data of Castrejon et al. \cite{castrejon2012breakup}. The fate of a contracting cylindrical liquid filament depends on the Ohnesorge number $\mathbb{O}{\rm h}$ (the relative importance of viscosity and surface tension) and the aspect ratio $\Gamma_0$. The aspect ratio is defined as $\Gamma_0 = L_0/D_0$ where $L_0$ is the initial length of the filament, and $D_0 = 2 R_0$ the width, and $R_0$ the radius.

We use a rectangular physical domain rectangular domain of size $[0,24]\times[0,180]\times[0,24]$ in which the liquid filament is placed at the center.
The initial phase field profile is:
\begin{align}\label{eq: init phi 3D LC}
  \phi^h_0(\mathbf{x}) =
  \begin{cases}
    \tanh{\dfrac{R_0-\sqrt{(x-12)^2+(y-32)^2 + (z-12)^2}}{\mathbb{C}{\rm n}\sqrt{2}}} & \text{if}~y < 32,\\[8pt]
    \tanh{\dfrac{R_0-\sqrt{(x-12)^2 + (z-12)^2}}{\mathbb{C}{\rm n}\sqrt{2}}} & \text{if}~32 < y < 148,\\[8pt]
    \tanh{\dfrac{R_0-\sqrt{(x-12)^2+(y-148)^2 + (z-12)^2}}{\mathbb{C}{\rm n}\sqrt{2}}} & \text{if}~148 < y.
  \end{cases}
\end{align}
We applied at all boundaries the no-penetration boundary condition. \cref{fig: init filament} shows the sketch of the problem setup.

\begin{figure}[h]
    \centering
\begin{tikzpicture}[scale=4, transform shape]
    \coordinate (A) at (0,0,0);
    \coordinate (B) at (1,0,0);
    \coordinate (C) at (1,0,1);
    \coordinate (D) at (0,0,1);
    \coordinate (E) at (0,2,0);
    \coordinate (F) at (1,2,0);
    \coordinate (G) at (1,2,1);
    \coordinate (H) at (0,2,1);

    \fill[gray!20] (A) -- (B) -- (C) -- (D) -- cycle;
    \fill[gray!20] (E) -- (F) -- (G) -- (H) -- cycle;
    \fill[gray!20] (B) -- (F) -- (G) -- (C) -- cycle;
    \fill[gray!20] (A) -- (E) -- (H) -- (D) -- cycle;
    \fill[gray!20] (A) -- (B) -- (F) -- (E) -- cycle;
    
    \draw [line width=0.35mm, red!70!black!80, dashed] (A) -- (B) ;
    \draw [line width=0.35mm, red!70!black!80, dashed] (A) -- (D) ;
    \draw [line width=0.35mm, red!70!black!80, dashed] (A) -- (E);
    \draw [line width=0.25mm, black] (D) -- (C) ;
    \draw [line width=0.25mm, black] (C) -- (B) ;
    \draw [line width=0.25mm, black] (E) -- (F) -- (G) -- (H) -- cycle;
    \draw [line width=0.25mm, black] (B) -- (F);
    \draw [line width=0.25mm, black] (C) -- (G);
    \draw [line width=0.25mm, black] (D) -- (H);


    \foreach \y in {0.6,0.605,...,1.4}{
        \shade[ball color=blue!50, opacity=0.5] (0.5, \y, 0.5) circle (0.04166666666);
    }
    \node[scale=0.24]  at (1.8, 0.875, 0.5) {Initial liquid filament};
    \node[scale=0.24]  at (1.615, 0.775, 0.5) {Length: $L_0$};
    \node[scale=0.24]  at (1.613, 0.675, 0.5) {Width: $D_0$};
    \node[scale=0.24]  at (1.89, 0.575, 0.5) {Aspect ratio: $\Gamma_0=L_0/D_0$};
    \draw [-Stealth, line width=0.25mm, dotted, blue!50!black!80] (1.25, 0.875, 0.5) -- (0.6, 0.875, 0.5);
    \node[scale=0.24]  at (-0.7, 0.9, 0.5) {Gravity};
    \draw [-Stealth, line width=0.25mm, dashed, green!50!black!80] (-0.5, 1.2, 0.5) -- (-0.5, 0.6, 0.5);
\end{tikzpicture}
    \caption{Setup of the liquid filament contraction problem.}
    \label{fig: init filament}
\end{figure}
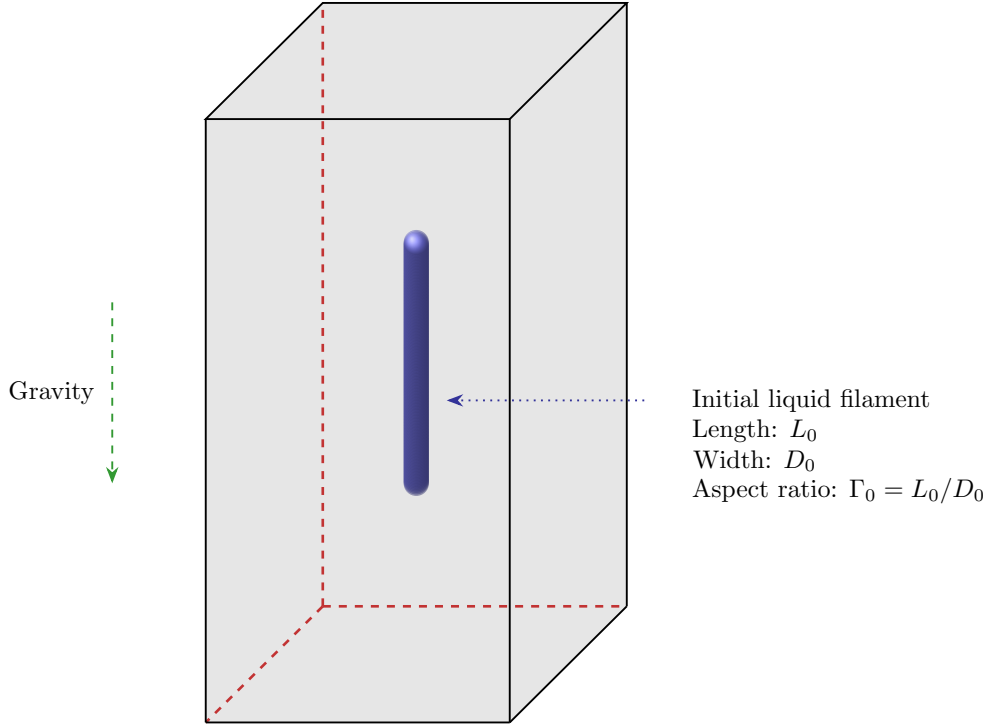

Similar as in the case of the rising bubble, the problem is symmetric in the planes $x=12$ and $z=12$. In order to reduce the computational effort, we only simulate the quarter $[0, 12] \times [0,180] \times [0,12]$ of the domain and apply symmetry boundary conditions.
Again, we use a stretched single patch mesh with a uniform mesh size $h = {\rm min}_K h_K$ inside the region of the ligament $[9, 12] \times [20, 160] \times [9, 12]$. We use a mesh with $16 \times 600 \times 16 = 153600$ elements with $h=0.25$.

A liquid filament either remains a single filament or breaks up into possibly multiple smaller filaments. Considering filaments that are initially at rest, those that have a large $\mathbb{O}{\rm h}$ and small $\Gamma_0$ typically remain a single body whereas filaments with small $\mathbb{O}{\rm h}$ and large $\Gamma_0$ tend to break up. It is difficult in an experimental setup to ensure that a filament is initially completely at rest. To minimize the influence of start-up effects, we study the contraction of a long liquid filament.

We consider a pure water liquid filament at rest at $22^{\circ}{\rm C}$ in the air.  The water viscosity is measured at $\nu_1 = 100\times 10^{-3} {\rm Pa~s}$, and the water density is $\rho_1 = 1000 \times 10^3{\rm g}/{\rm m}^3$ and the gravitational constant is $g = 9.81 {\rm m}/{\rm s}^2$. The dimensions are $R_0 = 0.13\times 10^{-3} {\rm m}$ and $\Gamma_0=59$. We select as reference values:
\begin{subequations}
    \begin{align}
        X_0 =&~ R_0, \\
        T_0 =&~ \sqrt{\dfrac{\rho_1 R_0^3}{\sigma}},\\
        U_0 =&~ \dfrac{R_0}{T_0},
    \end{align}
\end{subequations}
where $T_0$ is the capilary time scale. As a consequence the dimensionless numbers become:
\begin{subequations}
    \begin{align}
        \mathbb{R}{\rm e} =&~ \mathbb{O}{\rm h}^{-1},\\
        \mathbb{F}{\rm r} =&~ \mathbb{E}{\rm o}^{-1/2},\\
        \mathbb{W}{\rm e} =&~ 1,
    \end{align}
\end{subequations}
where the Ohnesorge number ($\mathbb{O}{\rm h}$) and the E\"{o}tv\"{o}s number ($\mathbb{E}{\rm o}$) are given by:
\begin{subequations}\label{eq: dimensionless quantities 3}
\begin{align}
     \mathbb{O}{\rm h} =&~ \dfrac{\nu_1}{\sqrt{\rho_1 R_0 \sigma}},\\
     \mathbb{E}{\rm o} =&~ \frac{\rho_1 g R_0^2}{\sigma}. 
\end{align}
\end{subequations}
The E\"{o}tv\"{o}s number ($\mathbb{E}{\rm o}$) describes the relative importance of gravity and surface tension, and is also known as the Bond number ($\mathbb{B}{\rm o}$). 

The system is now characterized by $6$ dimensionless quantities: $\Gamma_0, \mathbb{O}{\rm h}$, $\mathbb{E}{\rm o}$, $\mathbb{C}{\rm n}$, $\rho_1/\rho_2$ and $\nu_1/\nu_2$. The experiment is conducted over a time period of $10.7 \times 10^{-3} {\rm s}$, which corresponds to final time $T_{\rm end} = 62.6867~T_0$. We set $\Delta t_n = 0.002~ T_{\rm end}$. The Cahn number is taken as $\mathbb{C}{\rm n} = 0.64 h$.  The other dimensionless quantities are given in \cref{table: parameters 3D LC case}.

\begin{table}[ht]
\centering
\begin{tabularx}{\textwidth}{XXXXXXXXX}
\hspace{0.25cm}$\rho_1/\rho_2$ & $\nu_1/\nu_2$ & $\mathbb{O}{\rm h}$ & $\mathbb{E}{\rm o}$ \\[4pt]
\hline\\[-6pt]
 \hspace{0.25cm} $1000$ & $100$ & $1.01$ & $0.0022$ \\[6pt]
\hline
\end{tabularx}
\caption{Parameters for the three-dimensional ligament contraction case.}
\label{table: parameters 3D LC case}
\end{table}

In \cref{fig: Ligament 3D contours} we qualitatively compare our computational results of the contracting liquid filament with the experimental data of Castrejon et al. \cite{castrejon2012breakup}. We observe an overall good agreement between the computation and the experiment. In particular, the length of the liquid filament matches well. We see a slight deviation in the vertical position of the filament. This might be the consequence of a non-zero downwards velocity of the experiment at the initial time. Finally, we note that experimental results show that the filament does not break up (this is hard to detect in the last frame). In fact, Castrejon et al. \cite{castrejon2012breakup} find that at this $\mathbb{O}{\rm h}$ number the filament never breaks up, even for very large aspect ratios $\Gamma_0$. Our computational results confirm this observation.

\begin{figure}[!ht]
\begin{subfigure}{0.24\textwidth}
\centering
\includegraphics[width=0.8\textwidth]{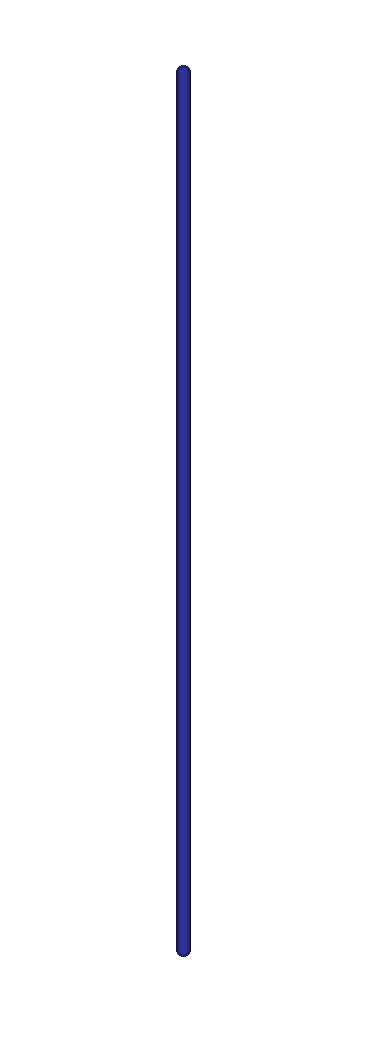}
\caption{$t=0.0\times 10^{-3} {\rm s}$ (num)}
\end{subfigure}
\begin{subfigure}{0.24\textwidth}
\centering
\includegraphics[width=0.80\textwidth]{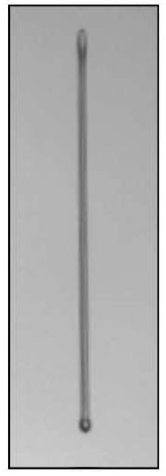}
\caption{$t=0.0\times 10^{-3} {\rm s}$ (exp)}
\end{subfigure}
\begin{subfigure}{0.24\textwidth}
\centering
\includegraphics[width=0.8\textwidth]{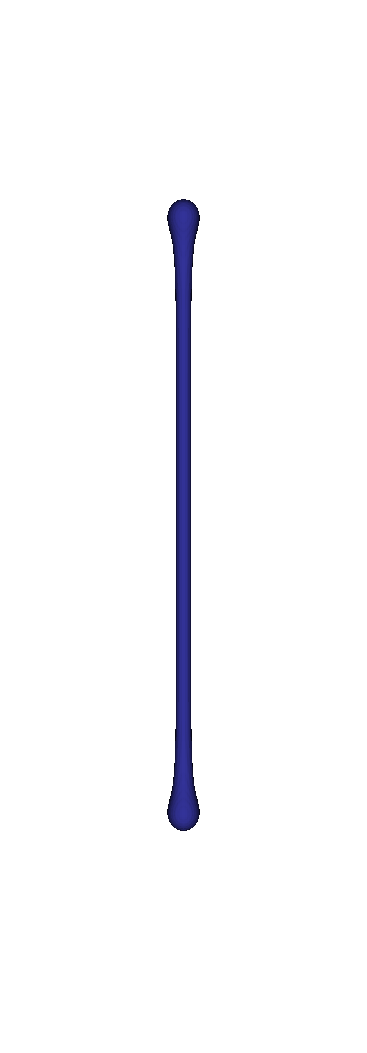}
\caption{$t=4.0\times 10^{-3} {\rm s}$ (num)}
\end{subfigure}
\begin{subfigure}{0.24\textwidth}
\centering
\includegraphics[width=0.80\textwidth]{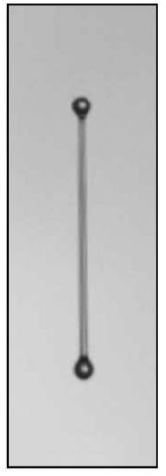}
\caption{$t=4.0\times 10^{-3} {\rm s}$ (exp)}
\end{subfigure}
\begin{subfigure}{0.24\textwidth}
\centering
\includegraphics[width=0.8\textwidth]{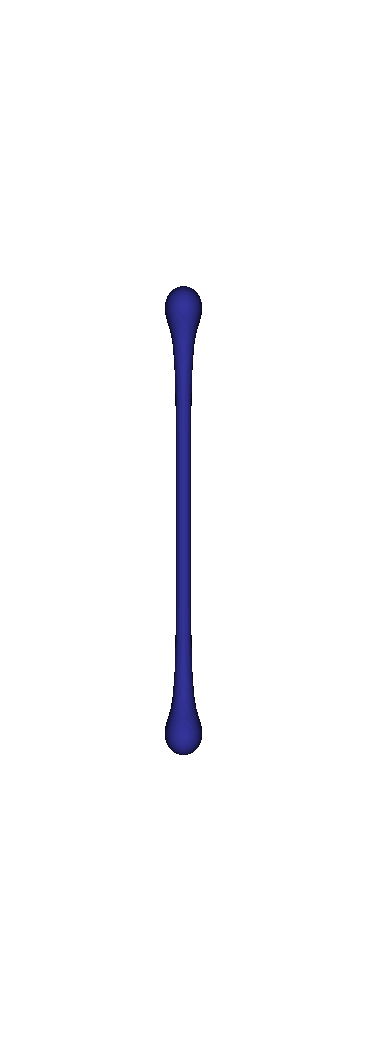}
\caption{$t=6.3\times 10^{-3} {\rm s}$ (num)}
\end{subfigure}
\begin{subfigure}{0.24\textwidth}
\centering
\includegraphics[width=0.80\textwidth]{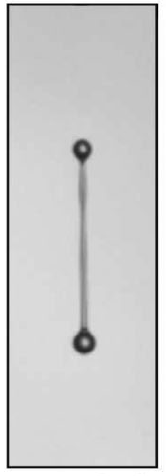}
\caption{$t=6.3\times 10^{-3} {\rm s}$ (exp)}
\end{subfigure}
\begin{subfigure}{0.24\textwidth}
\centering
\includegraphics[width=0.8\textwidth]{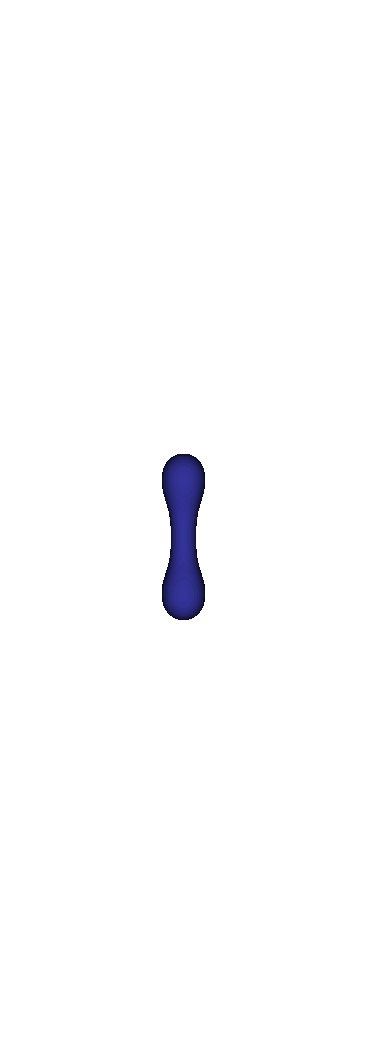}
\caption{$t=10.7\times 10^{-3} {\rm s}$ (num)}
\end{subfigure}
\begin{subfigure}{0.24\textwidth}
\centering
\includegraphics[width=0.80\textwidth]{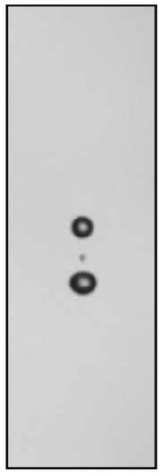}
\caption{$t=10.7\times 10^{-3} {\rm s}$ (exp)}
\end{subfigure}
\caption{Ligament contraction problem. Shape of the ligament using the numerical simulation (num) experimental results (exp).}
\label{fig: Ligament 3D contours}
\end{figure}

\clearpage
\section{Conclusion and outlook}\label{sec: summary outlook}

In this paper we have presented the consistent Navier-Stokes Cahn-Hilliard (NSCH) model with non-matching densities, alongside with a numerical methodology, and verification and validation studies. The NSCH model is a diffuse-interface energy-stable model that describes the motion of a mixture of fluids, and is derived from continuum mixture theory. The NSCH model is a single model, not a collection of models, that is invariant to the choice of fundamental variables. The mixture theory derivation naturally leads to a formulation in terms of the mass-averaged velocity. In order to circumvent the challenges of directly discretizing this NSCH formulation, we have adopted an equivalent formulation in terms of the divergence-free volume-averaged velocity. We have discretized this formulation using divergence-conforming isogeometric spaces. Finally, we have verified and validated the the methodology using two-dimensional test cases, and three-dimensional benchmark computations of rising bubbles and the contraction of a liquid filament. These three-dimensional computations compare well with experimental data.
Concluding, we have proposed a NSCH computational framework that:
\begin{itemize}
    \item uses the consistent NSCH model which naturally emerges from continuum mixture theory;
    \item employs a non-standard form of the NSCH model which properly accounts for the diffuse flux contribution in the momentum equation;
    \item uses a degenerate mobility;
    \item reduces in the single-fluid regime to the incompressible Navier-Stokes equations;
    \item uses a pointwise divergence-free velocity discretization in the entire computational domain;
    \item permits large density ratios;
    \item has a stable interface, and thus refrains from interface stabilization techniques;
    \item shows good agreement with experimental data.
\end{itemize}

We delineate some potential avenues for future research. First, regarding the mathematical analysis, establishing sharp interface asymptotics remains an important open problem. This sharp interface problem is invariant to the choice of fundamental variables in the NSCH model. Second, in regards to computation, the extension to high Reynolds number flow necessitates widely applicable and robust turbulence models. We anticipate that (energy-dissipative) variational multiscale stabilization mechanisms are well-suited for this purpose, e.g. by building onto \cite{EiAk17ii,evans2020variational}. 

\section*{Acknowledgments}
The authors wish to thank Sebastian Aland for sharing the two-dimensional computational data of the Navier-Stokes Cahn-Hilliard models.
MtE was partly supported by the German Research Foundation (Deutsche Forschungsgemeinschaft DFG) via the Walter Benjamin project EI 1210/1-1. DS gratefully acknowledges support from the German Research Foundation (Deutsche Forschungsgemeinschaft DFG) via the Emmy Noether Award SCH 1249/2-1, and  project SCH 1249/9-1. The authors gratefully acknowledge the computing time provided to them on the high-performance computer Lichtenberg at the NHR Centers NHR4CES at TU Darmstadt. This is funded by the Federal Ministry of Education and Research, and the state governments participating on the basis of the resolutions of the GWK for national high performance computing at universities.

\bibliographystyle{unsrt}
\bibliography{references}

\end{document}